\newcommand{\TODO}[1]{
\typeout{WARNING!!! there is still a TODO left}
\marginpar{\textbf{!TODO: }\emph{#1}}
}
\newcommand{\TODO}[1]{}
\newenvironment{todo}[1]{\noindent\rule{.3\textwidth}{1pt}\TODO{#1}\\}{\\\rule{.3\textwidth}{1pt}}
\newcommand{\NOTE}[1]{
\typeout{WARNING!!! there are still DRAFT NOTES left}
\marginpar{!DRAFT}\emph{\textbf{DRAFT NOTES:} #1}
}
\newcommand{\NOTE}[1]{}
\newcommand{\rank}{\mathsf{rank}}
\title{On Weak Odd Domination and Graph-based~Quantum~Secret~Sharing}
\author{Sylvain Gravier\inst{1,2}\and J\'er\^ome Javelle\inst{3}\and Mehdi Mhalla\inst{1,3} \and Simon Perdrix\inst{1,3}}
\date{}
\institute{CNRS \and Institut Fourier, University of Grenoble, France \and LIG, University of Grenoble, France}
\begin{document}

\maketitle

{\bf Keywords:} Complexity, Graph Theory, NP-Completeness, Quantum Information
\\
\begin{abstract}
A {weak odd dominated} (WOD) set in a  graph is a subset $B$ of vertices for which there exists a distinct set of vertices $C$ such that every vertex in $B$ has an odd number of neighbors in $C$. 
 We point out the connections of weak odd domination with odd domination, $[\sigma,\rho]$-domination, and perfect codes. We introduce bounds on $\kappa(G)$, the maximum size of WOD sets of a graph $G$, and on $\kappa'(G)$, the minimum size of non WOD sets of $G$. Moreover, we prove that the corresponding decision problems are NP-complete.

The study of weak odd domination is mainly motivated by the design of graph-based quantum secret sharing protocols
: a graph $G$ of order $n$ corresponds to a secret sharing protocol which threshold is $\kappa_Q(G) = \max(\kappa(G), n-\kappa'(G))$. These graph-based protocols are very promising in terms of physical implementation, however 
all such graph-based protocols studied in the literature have 
quasi-unanimity thresholds (i.e. $\kappa_Q(G)=n-o(n)$ where $n$ is the order of the graph $G$ underlying the protocol). In this paper, we show using probabilistic methods, the existence of graphs with smaller $\kappa_Q$ (i.e. $\kappa_Q(G)\le 0.811 n$ where $n$ is the order of $G$). We also prove that deciding for a given graph $G$ whether $\kappa_Q(G)\le k$ is NP-complete, which means that one cannot efficiently double check that a graph randomly generated has actually a $\kappa_Q$ smaller than $0.811n$.

\end{abstract}

\section{Introduction}
\label{intro}

\subsubsection*{Odd domination.} 
Odd domination is a variant of domination in which, given a graph $G=(V,E)$,  a set $C\subseteq V$ oddly dominates its closed odd neighborhood $Odd[C] := \bigtriangleup_{v\in C} N[v] = \{u\in V, |N[u] \cap C| = 1\mod2\}$ defined as the symmetric difference of the closed neighborhoods $N[v]= \{v\}\cup N(v)$ of the vertices $v$ in $C$, where $N(v) =\{u\in V, (u,v)\in E\}$ is the (open) neighborhood of $v$. 
An odd dominating set is a set of vertices $C\subseteq V$ such that $Odd[C]=V$. Odd dominating sets have been largely studied in the literature \cite{Amin19981,Caro} in particular for their role in the sigma-game \cite{Sutner,Sutner_sigma}. It has been noticeably proven that every graph contains at least one odd-dominating set \cite{Sutner} and that deciding whether a graph contains an odd dominating set of size at most $k$ is NP-complete \cite{Sutner}. 

Odd domination is a particular instance of the general framework of $[\sigma,\rho]$-domination \cite{Halldorsson:1999uq,Telle}. Given $\sigma, \rho \subseteq \mathbb N$, a $[\sigma, \rho]$-dominating set in a graph $G = (V,E)$ is a set $C\subseteq V$ such that $\forall v\in C, |N(v)\cap C|\in \sigma$, and $\forall v\in V\setminus C,  |N(v)\cap C|\in \rho$. Among others,   domination, independent set, perfect code, and odd domination problems can be formulated as $[\sigma, \rho]$-domination problems. In particular, odd domination corresponds to $[\textup{EVEN},\textup{ODD}]$-domination\footnote{Notice that odd domination is not a $[\textup{ODD}, \textup{ODD}]$-domination because open neighborhood are considered in the $[\sigma,\rho]$-domination instead of the closed neighborhood in the odd domination.}, where $\textup{EVEN} = \{2n, n\in \mathbb N\}$ and $\textup{ODD} = \mathbb N\setminus \textup{EVEN}$. The role of the parameters $\sigma$ and $\rho$ in the computational complexity of the corresponding decision problems  have been studied in the literature \cite{Telle}.

We consider a weaker version of odd domination which does not fall within the $[\sigma,\rho]$-domination framework. A weak odd dominated (WOD) set  is a set $B\subseteq V$ for which there exists $C\subseteq V\setminus B$ such that $B\subseteq Odd[C]$. Notice that, since $B\cap C=\emptyset$, $B\subseteq Odd[C]$ if and only if  $B\subseteq Odd(C) :=  \bigtriangleup_{v\in C} N(v) = \{u\in V, |N(u) \cap C| = 1\mod2\}$. Roughly speaking,  $B$ is a weak odd dominated set if it is oddly dominated by a set $C$ which does not intersect $B$. Weak odd domination does not fall within the $[\sigma, \rho]$-domination framework because, intuitively, a weak odd dominated set is not oddly dominated by its complementary set (as it would be in the $[\mathbb N, \textup{ODD}]$-domination) but by a subset of its complementary set. 

We consider two natural optimization problems related to weak odd dominated sets of a given graph $G$:  finding the size $\kappa(G)$ of the greatest WOD set and  finding  the size $\kappa'(G)$ of the smallest set which is not a WOD set. The greatest WOD set has a simple interpretation in a variant of the sigma-game: given a graph $G$, each vertex has three possible states: `on', `off', and  `broken'; when one plays on a vertex $v$, it makes the vertex $v$ `broken' and flips the states `on'/`off' of its neighbors. In the initial configuration all vertices are `off'. The size $\kappa(G)$ of the greatest WOD set corresponds to the greatest number of (unbroken) `on' vertices one can obtain.

  In section 2
  , we illustrate the weak odd domination by the computation of $\kappa$ and $\kappa'$ on a particular family of graphs. Moreover, we give non trivial bounds on these quantities, and  show that the corresponding decision problems are NP-complete.

\subsubsection*{Graph-based Quantum secret sharing.}

Our main motivation for studying weak odd dominated sets is not the variant of the sigma-game but  their crucial  role in graph-based protocols for quantum secret sharing. A quantum secret sharing  scheme \cite{gottesman} consists in sharing a quantum state among $n$ players such that authorized sets of players can reconstruct the secret. The protocol admits a threshold $k$ if any set of at least $k$ players can reconstruct the secret whereas any set of less than $k-1$ players has no information about the secret. Notice that a direct consequence of the no-cloning theorem \cite{WZ82} is that the threshold of any quantum secret sharing protocol on $n$ players must be greater than $n/2$. 

In a graph-based quantum secret sharing \cite{MS,JMP} the quantum state shared by the players is characterized by a simple undirected graph. In section \ref{QSS}, we show that the threshold of such a quantum secret sharing protocol based on a graph $G$ of order $n$ is $\kappa_Q(G)+1$, where $\kappa_Q(G) =  \max(\kappa(G), n-\kappa'(G))$.  Graph-based quantum secret sharing are very promising in terms of physical implementation \cite{Petal,Wetal}, however all the known graph-based secret sharing protocols have a threshold $n-o(n)$ (the best known protocol has a threshold $n-n^{0.68}$ \cite{JMP}), where $n$ is the number of players. On the other hand, it has been proved  that the threshold of any graph-based quantum secret sharing protocol on $n$ players is at least $\frac n2+\frac n{156}$  \cite{JMP}.

 In section \ref{sec:smallkappaQ}, we prove that there exists a family $\{G_i\}$ of graphs such $\kappa_Q(G_i)\le 0.811n_i$ where $n_i$ is the order of $G_i$. It crucially shows that graph-based quantum secret sharing protocols are not restricted to quasi-unanimity thresholds.    
We actually prove that that almost all the graphs have such a `small' $\kappa_Q$: if one picks  a random a graph $G$ of order $n$ (every edge occurs with probability $1/2$), then $\kappa_Q(G)\le 0.811n$ with probability greater than $1-\frac1n$.  We also prove that, given a graph $G$ and a parameter $k$,  deciding whether $\kappa_Q(G) \ge k$ is NP-complete. As a consequence, one cannot efficiently verify that a particular randomly generated graph has actually a `small' $\kappa_Q$.

\subsubsection*{Combinatorial Properties of Graph States.}

The development and the study of  graph-based protocols \cite{MS,KFMS,JMP,PS12,JMP11} have already pointed out  deep connections between graph theory and quantum information theory. For instance, it has been shown  \cite{informationflow} that a particular notion of flow \cite {DK06,BKMP,MP08,MMPST} in the underlying graph captures the flow, during the protocol, of the information contained in the secret from the dealer -- who encodes the secret and sends the shares -- to the authorized sets of players. The results presented in this paper contribute to these deep connections: we show that weak odd domination is a key concept for studying the properties of graph-based quantum secret sharing protocols.  

The study of graph-based protocols also contributes, as a by-product, to a better understanding of the combinatorial properties of a particular class of quantum states, called graph states \cite{HDERNB06-survey}. 
  The graph state formalism is a very powerful tool which is used in several areas of quantum information processing. Graph states provide a universal resource for quantum computing \cite{BRQC,VdNMDB06} and are also used in quantum correction codes \cite{SW03,ChuangShor} for instance.    
  Moreover, they are very promissing in terms of physical implementation \cite{Petal,Wetal}. As a consequence, progresses in the knowledge of the fundamental properties of graph states can potentially impact not only quantum secret sharing but  a wide area of quantum information processing.

\section{Weak Odd Domination}
\label{WOD}

We define one of the central notions in this paper: the weak odd domination. A set $B$ of vertices is a weak odd dominated (WOD) set if it is contained in the odd neighborhood of some set of vertices $C$ which does not intersect $B$: 
\begin{definition}
\label{WOD}
Given a simple undirected graph $G=(V,E)$, $B \subseteq V$ is a Weak Odd Dominated (WOD) set if there exists $C \subseteq V \setminus B$ such that $B \subseteq Odd(C)$, where $Odd(C) = \{v\in V, |N(v)\cap C| = 1 \mod2\}$.  
\end{definition}

Sets which are not WOD sets enjoy a noticeable  characterization: they contain  an odd set together with its odd neighborhood (proof is given in appendix):

\begin{lemma}
\label{accessing}
Given a graph $G=(V,E)$, $B$ is not a WOD set if and only if there exists $D\subseteq B$ such that $|D|=1\mod2$ and $Odd(D)\subseteq B$. 
\end{lemma}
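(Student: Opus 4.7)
The plan is to recast the WOD condition in linear algebra over $\mathbb{F}_2$ and invoke the Fredholm alternative. Let $\Gamma$ be the adjacency matrix of $G$, let $\bar B = V\setminus B$, and write $\Gamma_{B,\bar B}$ for the submatrix with rows indexed by $B$ and columns indexed by $\bar B$. Identifying each vertex subset with its characteristic vector, the inclusion $B\subseteq Odd(C)$ for $C\subseteq \bar B$ is exactly the linear equation $\Gamma_{B,\bar B}\,\chi_C = \mathbf{1}_B$ in $\mathbb{F}_2^B$. Consequently $B$ is a WOD set if and only if $\mathbf{1}_B$ lies in the column space of $\Gamma_{B,\bar B}$.

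The heart of the argument is the Fredholm alternative over $\mathbb{F}_2$: $\mathbf{1}_B\notin \mathrm{Im}(\Gamma_{B,\bar B})$ iff there exists $d\in \mathbb{F}_2^B$ with $d^T\Gamma_{B,\bar B}=0$ and $d^T\mathbf{1}_B=1$. Letting $D\subseteq B$ be the support of $d$, the equation $d^T\Gamma_{B,\bar B}=0$ says that $|N(v)\cap D|$ is even for every $v\in \bar B$, i.e. $Odd(D)\cap \bar B=\emptyset$, which is equivalent to $Odd(D)\subseteq B$. The equation $d^T\mathbf{1}_B=1$ simply says $|D|$ is odd. This yields both directions of the equivalence simultaneously.

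For readers preferring a direct combinatorial argument, the $(\Leftarrow)$ direction also falls to double counting: given $D\subseteq B$ with $|D|$ odd and $Odd(D)\subseteq B$, and assuming towards contradiction a witness $C\subseteq \bar B$ with $B\subseteq Odd(C)$, the sum $\sum_{v\in D}|N(v)\cap C|$ is congruent mod $2$ both to $|D|\equiv 1$ (since every $v\in D\subseteq B\subseteq Odd(C)$ has $|N(v)\cap C|$ odd) and, by swapping the order of summation, to $\sum_{u\in C}|N(u)\cap D|\equiv 0$ (since no $u\in C\subseteq\bar B$ lies in $Odd(D)$). The only real subtlety I anticipate is the bookkeeping behind the translation $d^T\Gamma_{B,\bar B}=0\Leftrightarrow Odd(D)\subseteq B$: the linear equation constrains only vertices of $\bar B$, but this matches $Odd(D)\subseteq B$ exactly, since the latter imposes no condition on vertices inside $B$.
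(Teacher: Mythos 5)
Your proof is correct and is essentially the paper's own argument: both recast $B\subseteq Odd(C)$ as the linear system $\Gamma_{B,\bar B}\,\chi_C=\mathbf 1_B$ over $\mathbb F_2$ and then exploit the duality between the image of the cut matrix and the kernel of its transpose (the paper packages this as a rank computation on augmented matrices via the quantity $\pi(B)$, you as the Fredholm alternative, but it is the same fact, relying on the symmetry of the adjacency matrix). The supplementary double-counting argument for the $(\Leftarrow)$ direction is also correct and a nice elementary addition not present in the paper.
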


This is clear from the definition that any subset of a WOD set is a WOD set and that any superset of a non WOD set is not a WOD set. As a consequence, we focus our attention on finding the greatest WOD set and the smallest non WOD set by considering the following quantities:

\begin{definition}
For a given graph $G$, let
$$
\kappa(G) = \max_{B~\text{WOD}} |B| \qquad\qquad
\qquad\kappa'(G) =  \min_{B~\text{not WOD}} |B|$$

\end{definition}

In the rest of this section, $\kappa$ and $\kappa'$ 
are computed for a particular  family of graphs, then we introduce bounds on these quantities in the general case, from which we prove the NP completeness of the decision problems associated with $\kappa$ and $\kappa'$.

To illustrate the concept of weak odd domination, we consider the following family of graphs: for any $p,q\in \mathbb N$, let $G_{p,q}$ be the complete $q$-partite graph where each independent set is of size $p$ (the proof is given in Appendix). 
$G_{p,q}$ is of order $n=pq$.

\begin{lemma}
\label{Gpq}
For any $p,q\in \mathbb N$, 

$\begin{array}{rclcrcll}
\kappa(G_{p,q})&=&n-p & \text{and} & \kappa'(G_{p,q})&=& q  & \text{ if $q=1\bmod 2$}\\
\kappa(G_{p,q})&=&\max(n-p,n-q)& \text{and} & \kappa'(G_{p,q})&=& p+q+1& \text{ if $q=0\bmod 2$}
\end{array}$
\label{gpq}
\end{lemma}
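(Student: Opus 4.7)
The plan is to exploit the strong symmetry of $G_{p,q}$. Label the parts $I_1, \dots, I_q$, each of size $p$. The crucial structural fact is that for any vertex $v \in I_i$, $N(v) = V \setminus I_i$, so for every $S \subseteq V$ one has $|N(v) \cap S| = |S| - |S \cap I_i|$; in particular, whether $v \in Odd(S)$ depends only on $i$, i.e.\ $Odd(S)$ is always a union of complete parts. This reduces everything to parity bookkeeping over the $q$-tuple $(|S \cap I_1|, \dots, |S \cap I_q|)$.

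For $\kappa(G_{p,q})$, I would fix a WOD pair $(B,C)$ with $C \subseteq V \setminus B$ and set $b_i = |B \cap I_i|$, $c_i = |C \cap I_i|$, noting $b_i + c_i \leq p$. Because $B \cap I_i \subseteq Odd(C)$ whenever $b_i > 0$, the condition $|C| - c_i \equiv 1 \pmod 2$ is forced on every such index $i$. I then split on the parity of $|C|$: if $|C|$ is odd, every ``active'' part needs $c_i$ even (cheapest is $c_i = 0$), but $\sum c_i$ odd then forces at least one inactive part, giving $|B| \leq n-p$. If $|C|$ is even, every active part needs $c_i$ odd, forcing $b_i \leq p-1$; a parity count on $\sum c_i$ shows that all $q$ parts can be active only when $q$ is even, producing $|B| \leq n-q$ in that sub-case and a strictly smaller bound otherwise. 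Each upper bound is matched by an explicit witness: $C = \{v\}$ with $v \in I_i$ realises $B = V \setminus I_i$, and for $q$ even, $C$ consisting of one vertex per part realises $B = V \setminus C$.

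For $\kappa'(G_{p,q})$ I would invoke Lemma \ref{accessing} and minimise $|B|$ over pairs $(D,B)$ with $D \subseteq B$, $|D|$ odd, and $Odd(D) \subseteq B$. Writing $d_i = |D \cap I_i|$, the structural observation gives $I_i \subseteq Odd(D)$ iff $|D| - d_i$ is odd, which, since $|D|$ is odd, is equivalent to $d_i$ being even. So each part with $d_i$ even forces all of $I_i$ into $B$ (cost $p$), while each part with $d_i$ odd contributes at least $d_i \geq 1$. Letting $k$ be the number of parts with $d_i$ odd, we have $k$ odd and $|B| \geq (q-k)p + k$. The minimum is attained by maximising $k$ under $k \leq q$ with $k$ odd: this permits $k = q$ when $q$ is odd and only $k = q-1$ when $q$ is even, yielding the two announced values. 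Matching constructions are obtained by taking $D$ as a transversal of $k$ appropriate parts (augmented, when $q$ is even, by the full union of the excluded part to form $B$).

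The main technical obstacle is the parity interplay between the global sizes $|C|$, $|D|$ and the per-part sizes $c_i$, $d_i$: the feasibility of each extremal configuration is exactly what is controlled by the parity of $q$, and teasing apart the two sub-cases cleanly is where the care lies. Once the dichotomy is isolated, both the lower bound and the matching explicit witnesses reduce to routine verifications.
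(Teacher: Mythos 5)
Your strategy is correct and is essentially the paper's own: the structural observation that $N(v)=V\setminus I_i$ for $v\in I_i$, hence that $Odd(S)$ is always a union of whole parts determined by the parities of $|S|-|S\cap I_i|$, is exactly what drives the appendix proof, and your explicit witnesses (a single vertex for $B=V\setminus I_i$, a transversal clique for $B=V\setminus C$ when $q$ is even, and $D\cup I_q$ with $D$ a transversal of $q-1$ parts for the minimal non-WOD set) coincide with the paper's. Your parity bookkeeping over $(c_1,\dots,c_q)$ and $(d_1,\dots,d_q)$ is if anything a cleaner, more systematic way to get the upper bound on $\kappa$ and the lower bound on $\kappa'$ than the paper's case analysis, and the $\kappa$ half is entirely correct.

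There is, however, one point you should not have glossed over. For $q$ even your own minimisation $(q-k)p+k$ at the optimal odd $k=q-1$ gives $p+q-1$, not the value $p+q+1$ announced in the lemma, so your closing claim that the computation ``yields the two announced values'' is false as written: $p+(q-1)\neq p+q+1$. The value $p+q-1$ is in fact the correct one --- your witness $B=D\cup I_q$ has exactly $(q-1)+p$ vertices and is non-WOD, and your lower bound shows nothing smaller works --- and the paper's own appendix proof likewise establishes $\kappa'(G_{p,q})=p+q-1$ for $q$ even; the $p+q+1$ in the printed statement of the lemma is a typo. The right move was to flag the discrepancy between your (correct) derivation and the stated claim rather than to assert agreement with a value your argument does not produce.
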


We show that the sum of $\kappa(G)$ and $\kappa'(\overline G)$ is always greater than the order of the graph $G$. The proof is based on the duality property  that the complement of a non-WOD set in $G$ is a WOD set in $\overline G$, the complement graph of $G$. 

\begin{lemma}\label{lem:dual}
Given a graph $G=(V,E)$, if $B\subseteq V$ is not a WOD set in $G$ then $V\setminus B$ is a WOD set in $\overline G$.  
\end{lemma}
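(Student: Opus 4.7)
The plan is to apply the characterization of non-WOD sets given in Lemma \ref{accessing} and then use the explicit subset it produces as a witness of weak odd domination in $\overline G$.

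Concretely, since $B$ is not a WOD set in $G$, Lemma \ref{accessing} furnishes a subset $D \subseteq B$ with $|D|$ odd and $Odd_G(D) \subseteq B$. I would then propose $D$ itself as the witness set in $\overline G$: since $D \subseteq B$, we have $D \subseteq V \setminus (V \setminus B)$, so $D$ is a candidate dominator for $V \setminus B$ in $\overline G$. It remains to verify that every $v \in V \setminus B$ lies in $Odd_{\overline G}(D)$.

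For the verification, fix $v \in V\setminus B$. Because $Odd_G(D) \subseteq B$ and $v \notin B$, the vertex $v$ is not in $Odd_G(D)$, i.e.\ $|N_G(v) \cap D|$ is even. Moreover $v \notin D$ (as $D \subseteq B$), so in the complement graph $N_{\overline G}(v) \cap D = D \setminus N_G(v)$, which has cardinality $|D| - |N_G(v) \cap D|$. This is odd minus even, hence odd, so $v \in Odd_{\overline G}(D)$. Thus $V\setminus B \subseteq Odd_{\overline G}(D)$, establishing that $V\setminus B$ is a WOD set in $\overline G$.

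There is no real obstacle: the only non-trivial ingredient is the existence of the odd set $D$ with $Odd_G(D)\subseteq B$, which is precisely what Lemma \ref{accessing} supplies. The parity flip under complementation, using that $|D|$ is odd, is what makes $D$ work simultaneously for both graphs. This duality is essentially a restatement of the identity $|N_{\overline G}(v) \cap D| \equiv |D| - |N_G(v) \cap D| \pmod 2$ for $v \notin D$.
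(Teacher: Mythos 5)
Your proof is correct and follows essentially the same route as the paper: invoke Lemma \ref{accessing} to obtain the odd set $D\subseteq B$ with $Odd_G(D)\subseteq B$, then use $D$ itself as the dominating set in $\overline G$ via the parity identity $|N_{\overline G}(v)\cap D| \equiv |D| - |N_G(v)\cap D| \pmod 2$ for $v\notin D$. Your write-up just makes the complementation step slightly more explicit than the paper does.
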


\begin{proof} Let $B$ be a non-WOD set in $G$. $\exists D\subseteq B$ such that $|D|=1\bmod 2$ and $Odd_G(D)\subseteq B$. As a consequence, $\forall v\in V\setminus B$, $|N_G(v)\cap D|=0 \bmod 2$. Since $|D|=1\bmod 2$, $\forall v\in V\setminus B$, $|N_{\overline G}(v)\cap D|=1\bmod 2$. Thus, $V\setminus B$ is a WOD set in $\overline G$. 
\hfill $\Box$
\end{proof}

\begin{theorem}\label{thm:dual}
For any graph $G$ of order $n$, $\kappa'(G)+\kappa(\overline G)\ge n$.
\end{theorem}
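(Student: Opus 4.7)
The plan is to derive the inequality as a direct corollary of Lemma~\ref{lem:dual}. The key observation is that the lemma gives us a map from non-WOD sets of $G$ to WOD sets of $\overline{G}$ via complementation, and complementation preserves size in the sense that $|B| + |V \setminus B| = n$.

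First I would pick a witness realizing $\kappa'(G)$: let $B \subseteq V$ be a non-WOD set of $G$ of minimum size, so that $|B| = \kappa'(G)$. Such a $B$ exists whenever $\kappa'(G)$ is defined (and if no non-WOD set exists, the inequality is trivial, but in fact $V$ itself is always non-WOD by Lemma~\ref{accessing} applied to any odd-sized $D$, so $\kappa'(G)$ is well defined). Then by Lemma~\ref{lem:dual}, $V \setminus B$ is a WOD set in $\overline{G}$, which immediately yields
\[
\kappa(\overline{G}) \;\geq\; |V \setminus B| \;=\; n - |B| \;=\; n - \kappa'(G).
\]
Rearranging gives $\kappa'(G) + \kappa(\overline{G}) \geq n$, as desired.

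There is essentially no obstacle here: the duality between WOD and non-WOD sets under graph complementation has already been established in Lemma~\ref{lem:dual}, and the theorem just repackages that duality as an inequality on the extremal quantities $\kappa$ and $\kappa'$. The only thing worth pointing out in the write-up is the trivial but necessary remark that $\kappa'(G)$ is well-defined, which follows because $V$ is always a non-WOD set (take any $D \subseteq V$ of odd size; then trivially $Odd(D) \subseteq V$, so by Lemma~\ref{accessing}, $V$ is not WOD).
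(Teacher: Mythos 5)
Your proof is correct and is essentially identical to the paper's own argument: take a minimum-size non-WOD set $B$ with $|B|=\kappa'(G)$, apply Lemma~\ref{lem:dual} to conclude $V\setminus B$ is WOD in $\overline G$, and read off $\kappa(\overline G)\ge n-\kappa'(G)$. The added remark that $\kappa'(G)$ is well defined (since $V$ itself is non-WOD) is a harmless extra observation not spelled out in the paper.
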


\begin{proof}
There exists a non-WOD set $B\subseteq V$ such that $|B|= \kappa'(G)$. According to Lemma \ref{lem:dual}, $V\setminus B$ is WOD in $\overline G$, so $n-|B|\le \kappa(\overline G)$, so $n-\kappa'(G)\le \kappa(\overline G)$. 
\hfill $\Box$
\end{proof}

For any vertex $v$ of a graph $G$, its (open) neighborhood $N(v)$ is a WOD set, whereas, according to lemma \ref{accessing} its closed neighborhood (i.e. $N[v] = \{v\} \cup {N}(v)$) is a non-WOD set, as a consequence:

$$\kappa(G)\ge \Delta~~~~~~~~~~~~~~~~~\kappa'(G)\le \delta+1$$
where $\Delta$ (resp. $\delta$) denotes the maximal (resp. minimal) degree of the graph $G$. 

In the following, we prove an upper bound on $\kappa(G)$ and a lower bound on $\kappa'(G)$.

\begin{lemma}\label{lem:ubound}
For any graph $G$ of order $n$ and degree $\Delta$, $\kappa(G)\le \frac{n.\Delta}{\Delta +1}$.
\end{lemma}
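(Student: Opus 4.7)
The plan is to use a simple double counting of the edges between a maximum WOD set $B$ and a witness set $C$. Let $B\subseteq V$ be a WOD set of size $\kappa(G)$, and let $C\subseteq V\setminus B$ be such that $B\subseteq Odd(C)$, which is guaranteed to exist by definition of a WOD set.

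First I would lower-bound the number of edges in the bipartite subgraph induced between $B$ and $C$. For every $v\in B$, one has $|N(v)\cap C|\equiv 1\bmod 2$, hence $|N(v)\cap C|\ge 1$. Summing over $v\in B$ gives at least $|B|$ edges between $B$ and $C$. Next I would upper-bound the same quantity using the degree condition: each $u\in C$ contributes at most $\deg(u)\le \Delta$ edges toward $B$, so the number of edges between $B$ and $C$ is at most $\Delta\,|C|$.

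Combining the two inequalities yields $|B|\le \Delta\,|C|$. Since $C\subseteq V\setminus B$, we have $|C|\le n-|B|$, so $|B|\le \Delta(n-|B|)$, which rearranges to $|B|(\Delta+1)\le n\Delta$, i.e.\ $|B|\le \frac{n\Delta}{\Delta+1}$. Taking $|B|=\kappa(G)$ gives the claim.

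There is no real obstacle here; the proof is a one-line double counting, and the only thing worth emphasizing is that the oddness of $|N(v)\cap C|$ is used merely to force it to be at least $1$, which is why the bound depends only on $\Delta$ and not on any finer parity structure.
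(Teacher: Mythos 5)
Your proof is correct and follows essentially the same argument as the paper: the paper also bounds $|B|\le|Odd(C)|\le\Delta\,|C|$ and $|C|\le n-|B|$, which is exactly your double counting of edges between $B$ and $C$ phrased slightly more compactly. Nothing is missing.
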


\begin{proof}
Let $B\subseteq V$ be a WOD set, according to Definition \ref{WOD}, $\exists C\subseteq V\setminus B$ such that $B \subseteq Odd(C)$. $|C|\le n-|B|$ and $|B|\le |Odd(C)|\le \Delta.|C|$, so $|B|\le \Delta.(n-|B|)$. It comes that $|B|\le \frac{n.\Delta}{\Delta+1}$, so $\kappa(G)\le \frac{n.\Delta}{\Delta+1}$.
\hfill $\Box$
\end{proof}

In the following we prove that this bound is reached only for graphs having a perfect code. A graph $G=(V,E)$ has a perfect code if  there exists $ C\subseteq V$ such that $C$ is an independent set and every vertex in $V\setminus C$ has exactly one neighbor in $C$.  

\begin{theorem}
For any graph $G$ of order $n$ and degree $\Delta$, $\kappa(G)=\frac{n.\Delta}{\Delta +1}$ if and only if $G$ has a perfect code $C$ such that  $\forall v\in C$, $d(v)=\Delta$. 
\end{theorem}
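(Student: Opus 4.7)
The plan is to prove both directions by tracking the equality cases in the chain of inequalities underlying Lemma~\ref{lem:ubound}.

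For the easy direction, suppose $G$ has a perfect code $C$ in which every vertex of $C$ has degree $\Delta$. I would take $B := V\setminus C$ as the candidate WOD set, with $C$ itself as the witness. Counting the edges between $C$ and $V\setminus C$ two ways (from $C$, each vertex contributes $\Delta$ since $C$ is independent; from $V\setminus C$, each vertex contributes exactly $1$ by the perfect code property) gives $\Delta|C| = n - |C|$, hence $|C| = \frac{n}{\Delta+1}$ and $|B| = \frac{n\Delta}{\Delta+1}$. Moreover every $u \in B$ has exactly one neighbor in $C$, so $|N(u)\cap C| = 1$ is odd, and thus $B \subseteq Odd(C)$, proving $B$ is WOD and achieves the upper bound.

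For the converse, suppose $\kappa(G) = \frac{n\Delta}{\Delta+1}$, and fix a WOD set $B$ of this size with witness $C \subseteq V\setminus B$ such that $B \subseteq Odd(C)$. The key idea is that the proof of Lemma~\ref{lem:ubound} telescopes as
\[
|B| \;\le\; |Odd(C)| \;\le\; \Bigl|\bigcup_{v\in C} N(v)\Bigr| \;\le\; \sum_{v\in C} |N(v)| \;\le\; \Delta\,|C| \;\le\; \Delta(n-|B|),
\]
and the assumed value of $|B|$ forces every inequality to be an equality. From these equalities I would read off, in order: $|C| = n-|B|$ together with $C\subseteq V\setminus B$ gives $C = V\setminus B$; the step $\sum|N(v)| = \Delta|C|$ forces $d(v)=\Delta$ for every $v\in C$; the step $\bigl|\bigcup N(v)\bigr| = \sum |N(v)|$ forces the family $\{N(v)\}_{v\in C}$ to be pairwise disjoint (so their union is actually a disjoint union, in which every element is covered an odd number, namely once, hence $Odd(C)=\bigsqcup_{v\in C} N(v)$); and finally $B = Odd(C) = \bigsqcup_{v\in C} N(v) = V\setminus C$ shows that every vertex of $V\setminus C$ has exactly one neighbor in $C$. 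The independence of $C$ follows because $\bigsqcup_{v\in C} N(v) \subseteq V\setminus C$ means no vertex of $C$ lies in any $N(w)$ with $w \in C$, so no two vertices of $C$ are adjacent. Hence $C$ is a perfect code in which every vertex has degree $\Delta$.

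The only non-routine step is the equality analysis for the symmetric difference: $|Odd(C)| \le \bigl|\bigcup_{v\in C} N(v)\bigr| \le \sum_{v\in C}|N(v)|$ is an equality precisely when the $N(v)$ are pairwise disjoint, in which case $Odd(C)$ coincides with their disjoint union. Once this observation is in hand, all structural consequences (uniform degree $\Delta$, independence of $C$, uniqueness of the neighbor in $C$ for every outside vertex) fall out by elementary counting.
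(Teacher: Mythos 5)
Your proof is correct and follows essentially the same route as the paper: the ($\Leftarrow$) direction by exhibiting $V\setminus C$ as a WOD set attaining the bound of Lemma~\ref{lem:ubound}, and the ($\Rightarrow$) direction by forcing equality throughout the counting chain $|B|\le|Odd(C)|\le\sum_{v\in C}|N(v)|\le\Delta|C|\le\Delta(n-|B|)$. Your write-up merely makes explicit the equality analysis (pairwise disjointness of the $N(v)$, $C=V\setminus B$) that the paper states tersely, so there is nothing to add.
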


\begin{proof}
($\Leftarrow$) Let $C$ be a perfect code of $G$ such that  $\forall v\in C$, $\delta(v)=\Delta$. $V\setminus C$ is a WOD set since $Odd(C) = V\setminus C$. Moreover $|V\setminus C|= \frac{n\Delta}{\Delta+1}$, so $\kappa(G)\ge \frac{n.\Delta}{\Delta +1}$. According to Lemma \ref{lem:ubound}, $\kappa(G)\le \frac{n\Delta}{\Delta+1}$, so $\kappa(G)= \frac{n\Delta}{\Delta+1}$.  \\
($\Rightarrow$) Let $B$ be a WOD set of size $\frac{n.\Delta}{\Delta +1}$. There exists  $C\subseteq V\setminus B$ such that $B\subseteq Odd(C)$. Notice that $|C|\le n-\frac{n.\Delta}{\Delta +1}=\frac{n}{\Delta +1}$. Moreover $|C|.\Delta\ge |Odd(C)| \ge  |B|$, so  $|C|=\frac{n}{\Delta +1}$. It comes that $|B|=|B\cap Odd(C)|\le \sum_{v\in C}d(v)\le \Delta.\frac{n}{\Delta+1}= |B|$. Notice that if $C$ is not a perfect code the first inequality is strict, and if $\exists v\in C$, $d(v)<\Delta$, the second inequality is strict. Consequently, $C$ is a perfect code and $\forall v\in C$, $d(v)=\Delta$.  
\hfill $\Box$
\end{proof}

\begin{corollary}\label{cor:kappa}
Given a $\Delta$-regular graph $G$, $\kappa(G) = \frac {n\Delta} {\Delta+1}$ if and only if  $G$ has a perfect code. 
\end{corollary}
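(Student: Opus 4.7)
The plan is to obtain this corollary as an immediate specialization of the preceding theorem, which states that $\kappa(G) = \frac{n\Delta}{\Delta+1}$ if and only if $G$ admits a perfect code $C$ with the additional property that every vertex of $C$ has degree exactly $\Delta$. In a $\Delta$-regular graph, every vertex has degree $\Delta$ by definition, so this extra condition on the perfect code becomes vacuous: any subset $C \subseteq V$ satisfies $d(v)=\Delta$ for all $v\in C$.

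Concretely, for the forward direction I would assume $\kappa(G) = \frac{n\Delta}{\Delta+1}$ and directly apply the theorem to produce a perfect code $C$ of $G$ (the degree condition on $C$ comes for free). For the reverse direction, I would assume $G$ has a perfect code $C$; $\Delta$-regularity automatically yields $d(v) = \Delta$ for all $v \in C$, and the theorem then gives $\kappa(G) = \frac{n\Delta}{\Delta+1}$. There is no real obstacle here, since the whole content of the corollary is the observation that the degree hypothesis in the theorem is automatically satisfied in the regular setting, so the proof will be essentially a one-line invocation in each direction.
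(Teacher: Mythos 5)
Your proposal is correct and matches the paper's treatment: the corollary is presented there as an immediate specialization of the preceding theorem, the degree condition on the perfect code being automatic for a $\Delta$-regular graph. Nothing more is needed.
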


We consider the problem {\bf MAX\_WOD} which consists in deciding, given a graph $G$ and an integer $k\ge 0$, whether $\kappa(G)\ge k$.

\begin{theorem}
{\bf MAX\_WOD} is NP-Complete.
\end{theorem}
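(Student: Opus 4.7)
The plan is to leverage Corollary \ref{cor:kappa} and reduce from the PERFECT CODE problem restricted to $\Delta$-regular graphs, which is a classical NP-complete problem (it remains NP-complete even on cubic graphs).

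First, MAX\_WOD is in NP: a certificate consists of a pair $(B,C)$ of disjoint vertex subsets with $|B|\ge k$; verifying that $B\subseteq Odd(C)$ amounts to a polynomial number of parity checks on neighborhood intersections.

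For NP-hardness, given an instance $G$ of PERFECT CODE on a $\Delta$-regular graph of order $n$, the reduction outputs the MAX\_WOD instance $(G, k)$ with $k=\lceil \tfrac{n\Delta}{\Delta+1}\rceil$. If $(\Delta+1)\nmid n$, then $G$ cannot admit a perfect code (the code size would have to be $\tfrac{n}{\Delta+1}$) and, by Lemma \ref{lem:ubound}, no WOD set of $G$ can reach $k$ either, so we can simply map such trivial cases to a fixed ``no'' instance of MAX\_WOD. Otherwise $k=\tfrac{n\Delta}{\Delta+1}$ is an integer; Lemma \ref{lem:ubound} forces $\kappa(G)\le k$, and Corollary \ref{cor:kappa} yields $\kappa(G)=k$ if and only if $G$ has a perfect code. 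Consequently, $\kappa(G)\ge k$ precisely when $G$ admits a perfect code. Since the reduction merely passes $G$ along and computes $k$, it runs in polynomial time.

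The proof is essentially forced once Corollary \ref{cor:kappa} is in hand, so I expect no real obstacle: the corollary already encodes the combinatorial equivalence between maximum WOD sets and perfect codes in the regular setting, and the only technical care required is handling the divisibility condition $(\Delta+1)\mid n$ cleanly. Without Corollary \ref{cor:kappa} one would have to build a dedicated gadget reduction (from, say, 3-SAT) to simulate both the odd-neighborhood structure and the disjointness constraint $C\cap B=\emptyset$, which would be significantly more laborious; the present route avoids that entirely.
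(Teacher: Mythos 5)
Your proposal is correct and follows essentially the same route as the paper: NP membership via a polynomially checkable certificate, and NP-hardness by reducing from PERFECT CODE on cubic/regular graphs using Lemma \ref{lem:ubound} together with Corollary \ref{cor:kappa} to equate $\kappa(G)=\frac{n\Delta}{\Delta+1}$ with the existence of a perfect code. Your only deviations are cosmetic: you include $C$ in the certificate instead of recovering it by solving a linear system over $\mathbb F_2$, and you treat the divisibility case $(\Delta+1)\nmid n$ explicitly, which the paper leaves implicit.
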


\begin{proof}
${\bf MAX\_WOD}$ is in the class NP since a WOD set $B$ of size $k$ is a YES certificate. Indeed,  deciding whether the certificate $B$ is WOD or not can be done in polynomial time by  solving for $X$ the linear equation $\Gamma_{V\setminus B}. X = 1_B$ in $\mathbb F_2$, where $1_B$ is a column vector of dimension $|B|$ where all entries are $1$, and $\Gamma_{V\setminus B}$ is the cut matrix, i.e. a submatrix of the adjacency matrix of the graph which columns correspond to the vertices in $V\setminus B$ and rows to those in $B$. In fact, $X\subseteq V\setminus B$ satisfies $\Gamma_{V\setminus B}. X = 1_B$ if and only if ($X\subseteq V\setminus B$ and $B\subseteq Odd(X)$) if and only if $B$ is WOD. 
For the completeness, given a 3-regular graph, if $\kappa(G) \ge \frac 3 4 n$ then $\kappa(G) =  \frac 3 4 n$ (since $\kappa(G) \le  \frac{n\Delta}{\Delta+1}$ for any graph). Moreover, according to Corollary \ref{cor:kappa}, $\kappa(G)= \frac 3 4 n$ if and only if $G$ has a perfect code. Since the problem of deciding whether a $3$-regular graph has a perfect code  is known to be NP complete (see \cite{perf_codes} and \cite{KMP}), so is ${\bf MAX\_WOD}$.
\hfill $\Box$
\end{proof}

Now we introduce a lower bound on $\kappa'$.

\begin{lemma}
For any graph $G$,  $\kappa'(G)\ge \frac n{n-\delta}$ where $\delta$ is the minimal degree of $G$. 
\end{lemma}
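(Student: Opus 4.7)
The plan is to derive this bound by combining the duality theorem (Theorem \ref{thm:dual}) with the upper bound on $\kappa$ (Lemma \ref{lem:ubound}), applied to the complement graph $\overline{G}$. The intuition is that a small non-WOD set in $G$ corresponds, via complementation, to a large WOD set in $\overline G$, and we already have a degree-based ceiling on how large WOD sets can be.

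First, Theorem \ref{thm:dual} gives $\kappa'(G) \ge n - \kappa(\overline G)$, so it suffices to upper-bound $\kappa(\overline G)$ in terms of parameters of $G$. The maximum degree of $\overline G$ is $\overline{\Delta} = n - 1 - \delta$, where $\delta$ is the minimum degree of $G$. Applying Lemma \ref{lem:ubound} to $\overline G$ then yields
\[
\kappa(\overline G) \le \frac{n \cdot \overline{\Delta}}{\overline{\Delta}+1} = \frac{n(n-1-\delta)}{n-\delta}.
\]
Substituting into the dual inequality gives
\[
\kappa'(G) \ge n - \frac{n(n-1-\delta)}{n-\delta} = \frac{n(n-\delta)-n(n-1-\delta)}{n-\delta} = \frac{n}{n-\delta},
\]
which is exactly the claimed bound.

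There is essentially no main obstacle here: the result drops out as a two-line consequence of previously established tools, the only thing to verify carefully being the degree-translation $\overline{\Delta} = n-1-\delta$ between $G$ and $\overline G$. As a sanity check, when $G$ is the complete graph $K_n$ we have $\delta = n-1$, the formula gives $\kappa'(G) \ge n$, and indeed the only non-WOD set in $K_n$ is $V$ itself, so the bound is tight in this limiting case. A direct proof avoiding the duality is also possible: starting from a minimum non-WOD set $B$ and an odd $D \subseteq B$ with $Odd(D) \subseteq B$ given by Lemma \ref{accessing}, one can count edges from $D$ to $V \setminus B$ and use $\delta$ to bound $|B|$ from below, but this essentially reconstructs the proof of Lemma \ref{lem:ubound} inside $\overline G$ and is less elegant.
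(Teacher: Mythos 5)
Your proof is correct and is essentially identical to the paper's own argument: both apply Theorem \ref{thm:dual} to get $\kappa'(G)\ge n-\kappa(\overline G)$ and then bound $\kappa(\overline G)$ via Lemma \ref{lem:ubound} using $\Delta(\overline G)=n-1-\delta(G)$. The algebra and the degree translation match the paper exactly.
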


\begin{proof} According to Theorem \ref{thm:dual}, $\kappa'(G)\ge n-\kappa(\overline G)$. Moreover, thanks to Lemma \ref{lem:ubound}, $n-\kappa(\overline G) \ge  n-\frac{n\Delta({\overline G})}{\Delta({\overline G})+1} =n- \frac{n(n-1-\delta(G))}{n-\delta(G)}= \frac{n}{n-\delta}$. 
\hfill $\Box$
\end{proof}

This bound is reached for the regular graphs for which their complement graph has a perfect code, more precisely:

\begin{theorem}\label{thm:kappap}
Given $G$ a $\delta$-regular graph such that $\frac {n}{n-\delta}$ is odd, $\kappa'({G}) = \frac n {n-\delta} $ if and only if  $\overline{G}$ has a perfect code. 
\end{theorem}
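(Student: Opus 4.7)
The plan is to reduce this statement to Corollary \ref{cor:kappa} applied to $\overline{G}$, which is $(n-1-\delta)$-regular since $G$ is $\delta$-regular. Note that the lower bound $\kappa'(G) \ge \frac{n}{n-\delta}$ was established via the chain
$$\kappa'(G) \;\ge\; n - \kappa(\overline{G}) \;\ge\; n - \tfrac{n\,\Delta(\overline{G})}{\Delta(\overline{G})+1} \;=\; \tfrac{n}{n-\delta},$$
using Theorem \ref{thm:dual} and Lemma \ref{lem:ubound}. This already gives one direction almost for free.

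For ($\Rightarrow$), I would observe that if $\kappa'(G) = \frac{n}{n-\delta}$, then both inequalities in the chain above must be equalities. In particular $\kappa(\overline{G}) = \frac{n\,\Delta(\overline{G})}{\Delta(\overline{G})+1}$, and since $\overline{G}$ is regular, Corollary \ref{cor:kappa} immediately yields a perfect code in $\overline{G}$.

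For ($\Leftarrow$), suppose $\overline{G}$ admits a perfect code $C$. Then $|C| = \frac{n}{\Delta(\overline{G})+1} = \frac{n}{n-\delta}$, which is odd by hypothesis. I would use Lemma \ref{accessing} with $D = C$ to show that $C$ itself is a non-WOD set in $G$. The key computation: for any $v \in V \setminus C$, the sets $N_G(v)\cap C$ and $N_{\overline{G}}(v)\cap C$ partition $C$, so $|N_G(v)\cap C| = |C| - |N_{\overline{G}}(v)\cap C| = |C|-1$, which is even since $|C|$ is odd. Hence $Odd_G(C) \subseteq C$, and since $|C|$ is odd, Lemma \ref{accessing} gives that $C$ is non-WOD in $G$. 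This yields $\kappa'(G) \le |C| = \frac{n}{n-\delta}$, matching the lower bound.

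The only real subtlety is the parity argument in ($\Leftarrow$): it is precisely the hypothesis that $\frac{n}{n-\delta}$ is odd that guarantees $|C|-1$ is even, which is what makes the perfect-code property in $\overline{G}$ translate into the ``$Odd_G(C)\subseteq C$'' condition required by Lemma \ref{accessing}. Without this parity hypothesis, the perfect code in $\overline{G}$ would not suffice to build a non-WOD set of the right size, which is consistent with the statement restricting to the odd case.
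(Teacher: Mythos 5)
Your proof is correct. The ($\Leftarrow$) direction coincides with the paper's: a perfect code $C$ of $\overline{G}$ has size $\frac{n}{n-\delta}$, which is odd by hypothesis, and your computation $|N_G(v)\cap C| = |C| - |N_{\overline{G}}(v)\cap C| = |C|-1$ (even) for $v\notin C$ gives $Odd_G(C)\subseteq C$, hence $C$ is non-WOD by Lemma \ref{accessing}; the paper states exactly this, only more tersely. The ($\Rightarrow$) direction is where you genuinely diverge. The paper argues directly: it takes a non-WOD set $B$ of size $\frac{n}{n-\delta}$, extracts the odd-cardinality set $D\subseteq B$ with $Odd_G(D)\subseteq B$, uses the argument of Lemma \ref{lem:dual} to get $V\setminus B\subseteq Odd_{\overline{G}}(D)$, and then a counting argument forces $|D|=\frac{n}{n-\delta}$ and $D$ to be a perfect code of $\overline{G}$. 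You instead squeeze the chain $\frac{n}{n-\delta}=\kappa'(G)\ge n-\kappa(\overline{G})\ge \frac{n}{n-\delta}$ (Theorem \ref{thm:dual} and Lemma \ref{lem:ubound} with $\Delta(\overline{G})=n-1-\delta$) to conclude $\kappa(\overline{G})=\frac{n\,\Delta(\overline{G})}{\Delta(\overline{G})+1}$, and then invoke Corollary \ref{cor:kappa} for the $(n-1-\delta)$-regular graph $\overline{G}$. This is legitimate and shorter, since it reuses the extremal characterization of $\kappa$ already established instead of redoing the counting; what the paper's direct argument buys in exchange is explicit structural information, namely that the minimum non-WOD set itself (which coincides with its odd subset $D$) is a perfect code of $\overline{G}$, rather than mere existence. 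Your observation that the parity hypothesis is only needed for ($\Leftarrow$) is consistent with both arguments.
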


\begin{proof}
($\Leftarrow$) Let $C$ be a perfect code of $\overline G$. Since $|C|=\frac n{\Delta(\overline G)+1}=\frac n{n-\delta}=1~\bmod{2}$, $Odd_{{G}}(C)\subseteq C$, thus $C$ is a non-WOD set in ${G}$, so $ \kappa'({G})\le \frac n {n-\delta}$. Since $\kappa'(G) \ge \frac n {n-\delta}$ for any graph, $\kappa'(G) = \frac n {n-\delta}$\\
($\Rightarrow$) Let $B$ be a non-WOD set of size $\frac n{n-\delta}$ in ${G}$. $\exists D\subseteq B$ such that $|D|=1\bmod 2$ and $Odd_{{G}}(D)\subseteq B$. According to Lemma \ref{lem:dual},   $V\setminus B\subseteq Odd_{\overline G}(D)$, so $|Odd_{\overline G}(D)|\ge \Delta(\overline G)\frac{n }{n-\delta}$, which implies that  $|D|.\Delta(\overline G) \ge \Delta(\overline G)\frac{n}{n-\delta}$. As a consequence, $|D|= \frac n {n-\delta}$ and since every vertex of $V\setminus B$ (of size $\Delta(\overline G)\frac{n}{n-\delta}$) in $\overline G$ is connected to $D$, $D$ must be a perfect code. 
\hfill $\Box$
\end{proof}

We consider the problem ${\bf MIN\_\neg WOD}$ which consists in deciding, given a graph $G$ and an integer $k\ge 0$, whether $\kappa'(G)\le k$?

\begin{theorem}
 ${\bf MIN\_\neg WOD}$ is NP-Complete.
\end{theorem}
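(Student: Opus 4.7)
The plan is to show NP membership via the linear-algebraic characterization used in the \textbf{MAX\_WOD} proof, and to show NP-hardness by reducing from \textsc{Perfect Code} on $3$-regular graphs, exploiting Theorem~\ref{thm:kappap} which characterizes exactly when $\kappa'(G)$ meets its general lower bound $\frac{n}{n-\delta}$.

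For membership in NP, I would take as a certificate a candidate set $B \subseteq V$ with $|B| \le k$. To verify in polynomial time that $B$ is \emph{not} a WOD set, I would reuse the Gaussian elimination argument already used in the \textbf{MAX\_WOD} proof: $B$ is WOD iff the linear system $\Gamma_{V \setminus B}\cdot X = 1_B$ over $\mathbb{F}_2$ admits a solution, so checking infeasibility of this system is a polynomial-time deterministic test for non-WOD-ness.

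For NP-hardness, let $H$ be an instance of \textsc{Perfect Code} on $3$-regular graphs, which is NP-complete \cite{perf_codes,KMP}. Since a perfect code in a $3$-regular graph on $n$ vertices has size $n/4$, we may assume $n = 4m$. To apply Theorem~\ref{thm:kappap} we must additionally have $n/4$ odd, so I would first normalize the instance: if $m$ is even, replace $H$ by $H' := H \cup K_4$ (still $3$-regular with $n' = 4(m+1)$ vertices, and admitting a perfect code iff $H$ does, since $K_4$ has a perfect code consisting of a single vertex); otherwise set $H' := H$. Then $n'/4$ is odd. Now define the \textbf{MIN\_$\neg$WOD} instance $G := \overline{H'}$, which is $(n'-4)$-regular, and ask whether $\kappa'(G) \le n'/4$. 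Since in general $\kappa'(G) \ge \frac{n'}{n'-\delta(G)} = \frac{n'}{4}$, the answer to this question is yes iff $\kappa'(G) = n'/4$, which by Theorem~\ref{thm:kappap} holds iff $\overline{G} = H'$ has a perfect code, iff $H$ has a perfect code. The construction is plainly polynomial, so this completes the reduction.

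I do not foresee a serious obstacle: the only subtlety is the parity condition $n'/4 \equiv 1 \pmod 2$ required by Theorem~\ref{thm:kappap}, and the $K_4$-padding step is precisely designed to ensure we always fall into the regime where the characterization applies.
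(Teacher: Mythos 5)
Your proposal is correct and follows essentially the same route as the paper: NP membership via the $\mathbb{F}_2$ linear-system test for WOD-ness, and hardness by reducing \textsc{Perfect Code} on $3$-regular graphs through Theorem~\ref{thm:kappap}, using the $K_4$-padding to force $n/4$ odd and taking the complement graph as the \textbf{MIN\_$\neg$WOD} instance. You are in fact slightly more explicit than the paper about why $\kappa'(G)\le n'/4$ collapses to equality via the general lower bound $\kappa'(G)\ge \frac{n'}{n'-\delta}$, which the paper leaves implicit.
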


\begin{proof}
${\bf MIN\_\neg WOD}$ is in the class NP since a non-WOD set of size $k$ is a YES certificate.
For the completeness, given a 3-regular graph $G$,  if $\frac{n}{4}$ is odd then according to Theorem \ref{thm:kappap}, $G$ has a perfect code if and only if $\kappa'(\overline G)=\frac{n}{4}$. If   $\frac{n}{4}$ is even, we add a $K_4$ gadget to the graph $G$. Indeed, $G\cup K_4$ is a 3-regular graph and $\frac{n+4}{4}=\frac{n}{4}+1$ is odd. Moreover, $G$ has a perfect code if and only if $G\cup K_4$ has a perfect code if and only if $\kappa'(\overline {G\cup K_4})=\frac{n}{4}+1$. Since deciding whether a $3$-regular graph has a perfect code  is known to be NP complete, so is ${\bf MIN\_\neg WOD}$
\hfill $\Box$
\end{proof}

\section{From WOD sets to quantum secret sharing}
\label{QSS}

A quantum secret sharing  scheme \cite{gottesman} consists in: $(i)$ encoding a quantum state (the secret) into a $n$-partite quantum state and $(ii)$ sending to each of the $n$ players one part of this encoded quantum state 
 such that authorized sets of players can collectively reconstruct the secret. 
In \cite{MS}, Markham and Sanders introduced a particular family of quantum secret sharing protocols where the $n$-partite quantum state shared by the players is represented by a graph (such quantum states are called graph states \cite{BRQC}). They investigated the particular case where the secret is classical and they have shown that a set of players can perfectly recover a quantum secret in a protocol described by a graph $G$ if and only if they can recover a classical secret in both protocols described by $G$ and $\overline G$. In \cite{informationflow}, graphical conditions have been proven for a set of players to be able to recover a classical secret or not. Rephrased in terms of weak odd domination, they proved that a non-WOD set of players can recover a classical secret, whereas a WOD set cannot recover a classical secret. As a consequence, any set of more than $\kappa_Q(G) = \max ( \kappa(G), \kappa(\overline{G}))$ players can recover a quantum secret in the protocol described by $G$ since they can reconstruct the classical secret in both protocols $G$ and $\overline G$. Moreover, there exists a set of $B$ of players such that $|B|\le  \kappa_Q(G)$ which cannot recover the secret in $G$ or in $\overline G$, thus $B$ cannot perfectly recover  the quantum secret. As a consequence, $\kappa_Q(G)+1$ is nothing but the optimal threshold from which any set of more than $\kappa_Q(G)$ players can recover the quantum secret in the protocol described by $G$. To guarantee that $\kappa_Q(G)+1$ is actually the threshold of the protocol, i.e. any set of less than $\kappa_Q(G)$ players have no information about the secret, a preliminary encoding of the secret is performed (see \cite{JMP} for details).

In the following, we prove that deciding, given a graph $G$ and $k\ge0$, whether $\kappa_Q(G)\ge k$ is NP complete (Theorem \ref{NPkappaQ}). The proof consists in a reduction from the problem ${\bf MIN\_\neg WOD}$, and requires the following two ingredients: an alternative characterization of $\kappa_Q$ in terms of $\kappa$ and $\kappa'$ (Lemma \ref{lem:alter}); and the evaluation of $\kappa$ and $\kappa'$ for particular graphs consisting of multiple copies of a same graph (Lemma \ref{lem:copies}).

\begin{lemma}\label{lem:alter}
Given a graph $G$ of order $n$, $\kappa_Q(G) = \max ( \kappa(G), n-\kappa'(G) )$
\end{lemma}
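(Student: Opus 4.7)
The plan is to combine the definition $\kappa_Q(G) = \max(\kappa(G), \kappa(\overline{G}))$ recalled at the start of Section \ref{QSS} with the duality bounds already established. Theorem \ref{thm:dual} gives $\kappa(\overline G) \ge n - \kappa'(G)$, so immediately $\kappa_Q(G) = \max(\kappa(G), \kappa(\overline G)) \ge \max(\kappa(G), n - \kappa'(G))$, and one direction is free. The substantive part is the reverse inequality $\kappa(\overline G) \le \max(\kappa(G), n - \kappa'(G))$.

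To prove this, I would fix an arbitrary WOD set $B$ of $\overline G$ with a witness $C \subseteq V \setminus B$ such that $B \subseteq Odd_{\overline G}(C)$, and bound $|B|$ by $\max(\kappa(G), n - \kappa'(G))$. Since $v \notin C$ for every $v \in B$, we have the parity identity $|N_{\overline G}(v) \cap C| = |C| - |N_G(v) \cap C|$, and by hypothesis the left hand side is odd for every $v \in B$. The argument then splits on the parity of $|C|$. If $|C|$ is even, then $|N_G(v) \cap C|$ is odd for every $v \in B$, so $B \subseteq Odd_G(C)$; together with $C \subseteq V \setminus B$ this means $B$ is itself a WOD set in $G$, and $|B| \le \kappa(G)$. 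If $|C|$ is odd, then $|N_G(v) \cap C|$ is even for every $v \in B$, so $Odd_G(C) \subseteq V \setminus B$; together with $C \subseteq V \setminus B$ and $|C|$ odd, Lemma \ref{accessing} (taking $D = C$ inside $V \setminus B$) certifies that $V \setminus B$ is not a WOD set in $G$, hence $\kappa'(G) \le |V \setminus B|$, i.e., $|B| \le n - \kappa'(G)$. In both cases $|B| \le \max(\kappa(G), n - \kappa'(G))$, which is the desired bound.

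The only delicate point is the translation between $Odd_{\overline G}(C)$ and $Odd_G(C)$, encoded by the identity $|N_{\overline G}(v) \cap C| = |C| - |N_G(v) \cap C|$ for $v \notin C$; once this is in place, the parity of $|C|$ automatically selects which of the two bounds the WOD set $B$ falls under, so the case split is forced rather than requiring any clever construction.
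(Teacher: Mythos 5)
Your proof is correct and follows essentially the same route as the paper: both hinge on the parity identity $|N_{\overline G}(v)\cap C|=|C|-|N_G(v)\cap C|$ for $v\notin C$, with $|C|$ even yielding a WOD set in $G$ and $|C|$ odd yielding, via Lemma~\ref{accessing} with $D=C$, a non-WOD set $V\setminus B$ in $G$. The only cosmetic difference is that you run the case split on an arbitrary WOD set of $\overline G$, whereas the paper assumes $\kappa(G)<\kappa(\overline G)$ and rules out the even case by contradiction; the content is identical.
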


\begin{proof}
Lemma \ref{lem:dual} gives $\kappa(\overline{G}) \geq n-\kappa'(G)$.
We show that if the value of $\kappa_Q(G)$ is not given by the value of $\kappa(G)$, we must have the equality between $\kappa(\overline{G})$ and $n-\kappa'(G)$.
In other terms, we want to show that $\kappa(G) < \kappa(\overline{G}) \Rightarrow \kappa(\overline{G}) = n-\kappa'(G)$.

We assume $\kappa(G) < \kappa(\overline{G})$.
There exists a set $B \subseteq V$ of size $\kappa(\overline{G})$ dominated by some $C \subseteq V \setminus B$.
We claim that $|C| = 1 \bmod 2$, otherwise $B$ would be WOD in $G$ and $\kappa(G) \geq \kappa(\overline{G})$.
Then the set $V\setminus B$ is non-WOD in $G$ since it contains $C \cup Odd(C)$ (see Lemma \ref{accessing}).
Consequently, $\kappa'(G) \leq |V\setminus B|$ which can be written $\kappa(\overline{G}) \leq n-\kappa'(G)$.
\hfill $\Box$
\end{proof}

\begin{lemma}\label{lem:copies}
For any graph $G$ and any $r>0$,
$\kappa(G^r) = r.\kappa(G)$ and 
$\kappa'(G^r) = \kappa'(G)$ 
 where $G^1=G$ and $G^{r+1}=G\cup G^r$.
\end{lemma}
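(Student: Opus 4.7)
The plan is to prove both equalities by induction on $r$, or equivalently in one shot by exploiting the disconnected structure: since $G^r$ is the disjoint union of $r$ copies of $G$, the key observation is that for any subset $S$ of vertices contained entirely in one copy, $Odd_{G^r}(S) = Odd_G(S)$, and more generally $Odd_{G^r}(S) = \bigcup_i Odd_G(S \cap V(G_i))$ with the union being disjoint across copies.

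For $\kappa(G^r) = r\cdot \kappa(G)$, I would handle the two inequalities separately. The lower bound is immediate: take a maximum WOD set $B$ of $G$ with dominator $C \subseteq V(G)\setminus B$; placing a copy of $(B,C)$ in each of the $r$ copies yields a WOD set of size $r\kappa(G)$ in $G^r$. For the upper bound, let $B$ be any WOD set in $G^r$ dominated by $C \subseteq V(G^r)\setminus B$; writing $B_i = B\cap V(G_i)$ and $C_i = C\cap V(G_i)$, the disconnectedness forces $B_i \subseteq Odd_G(C_i)$ in each copy, so each $B_i$ is WOD in $G$, giving $|B|=\sum_i |B_i|\le r\kappa(G)$.

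For $\kappa'(G^r) = \kappa'(G)$, I would use the Lemma \ref{accessing} characterization of non-WOD sets. For the upper bound, take a minimum non-WOD set $B$ of $G$ with witness $D \subseteq B$ (with $|D|$ odd and $Odd_G(D) \subseteq B$); embedding $B$ in a single copy of $G$ inside $G^r$, the witness $D$ still satisfies $Odd_{G^r}(D) = Odd_G(D) \subseteq B$ since $D$ lies in one copy, so $B$ is non-WOD in $G^r$ and $\kappa'(G^r)\le \kappa'(G)$. For the lower bound, let $B$ be a minimum non-WOD set in $G^r$ with witness $D$ of odd cardinality such that $Odd_{G^r}(D) \subseteq B$. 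Because $|D|$ is odd, at least one copy $G_i$ contributes an odd-sized slice $D_i := D \cap V(G_i)$. Using the disconnected decomposition, $Odd_G(D_i) = Odd_{G^r}(D_i) \subseteq B \cap V(G_i)$, so $B\cap V(G_i)$ is a non-WOD set in $G$, giving $|B| \ge |B\cap V(G_i)| \ge \kappa'(G)$.

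Neither direction has a real obstacle: the whole proof rides on the fact that adjacency in a disjoint union decomposes copy-wise, so odd neighborhoods and the parity witness from Lemma \ref{accessing} both split cleanly. The only subtle point to flag explicitly is the parity argument for the lower bound on $\kappa'(G^r)$, where the oddness of $|D|$ is precisely what guarantees that at least one component-slice $D_i$ inherits odd cardinality and hence serves as a non-WOD witness within a single copy of $G$.
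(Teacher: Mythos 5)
Your proof is correct, and its overall skeleton (everything rides on the copy-wise decomposition of odd neighborhoods in a disjoint union) is the same as the paper's; the two constructive directions, $\kappa(G^r)\ge r\kappa(G)$ and $\kappa'(G^r)\le\kappa'(G)$, are identical to the paper's. The one place where you genuinely diverge is the inequality $\kappa'(G^r)\ge\kappa'(G)$: the paper argues contrapositively from the definition of WOD, showing that any $B\subseteq V(G^r)$ with $|B|<\kappa'(G)$ has every slice $B\cap V(G_i)$ of size below $\kappa'(G)$, hence each slice is dominated by some $C_i$, and the union $\bigcup_i C_i$ dominates $B$, so $B$ is WOD; you instead start from a non-WOD set $B$ in $G^r$, invoke the odd-witness characterization of Lemma \ref{accessing}, and use the parity pigeonhole (an odd $|D|$ must have an odd slice $D_i$) to localize a non-WOD witness in a single copy, giving $|B|\ge|B\cap V(G_i)|\ge\kappa'(G)$. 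Both are sound; the paper's version is self-contained from Definition \ref{WOD} and constructs the dominator explicitly, while yours leans on Lemma \ref{accessing} but is arguably tighter in that it pinpoints which copy carries the obstruction (the same parity trick the paper itself uses elsewhere, e.g.\ in Theorem \ref{thm:kappap}). Similarly, for $\kappa(G^r)\le r\kappa(G)$ you decompose a WOD set directly where the paper argues by contraposition via pigeonhole; this is only a cosmetic difference.
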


\begin{proof}~
\\-- $[\kappa(G^r) = r.\kappa(G) \label{kappa_r}]$:
Let $B$ be a WOD set in $G$ of size $\kappa(G)$. $B$ is in the odd neighborhood of some $C \subseteq V$. Then the set $B_r \subseteq V(G^r)$ which is the union of sets $B$ in each copy of the graph $G$ is in the odd neighborhood of $C_r \subseteq V(G^r)$, the union of sets $C$ of each copy of $G$.
Therefore $B_r$ is WOD and $\kappa(G^r) \geq r.\kappa(G)$.
Now if we pick any set $B_0 \subseteq V(G^r)$ verifying $|B_0|> r.\kappa(G)$, there exists a copy of $G$ such that $|B_0 \cap G| > \kappa(G)$.
Therefore $B_0$ is a non-WOD set and $\kappa(G^r) \leq r.\kappa(G)$.
\\-- $[\kappa'(G^r) = \kappa'(G) \label{kappa_p_r}]$:
Let $B$ be a non-WOD set in $G$ of size $\kappa'(G)$. If we consider $B$ as a subset of $V(G^r)$ contained in one copy of the graph $G$, $B$ is a non-WOD set in $G^r$.
Therefore $\kappa'(G^r) \leq \kappa'(G)$.
If we pick any set $B \subseteq V(G^r)$ verifying $|B|< \kappa'(G)$, its intersection with each copy of $G$ verifies $|B \cap G| < \kappa'(G)$.
Thus, each such intersection is in the odd neighborhood of some $C_i$. So $B$ is in the odd neighborhood of $\bigcup_{i=1..r}C_i$. 
Consequently, $B_0$ is a WOD set in $G^r$ and $\kappa'(G^r) \geq \kappa'(G)$.
\hfill $\Box$
\end{proof}

We consider the problem {\bf QKAPPA} which consists in deciding, for a given graph $G$ and $k\ge 0$, whether $\kappa_Q(G)\ge k$, i.e. $\kappa(G)\ge  k$ or $\kappa'(G)\le  n-k$?

\begin{theorem}
\label{NPkappaQ}
{\bf QKAPPA} is NP-Complete.
\end{theorem}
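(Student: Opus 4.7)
The plan is to combine the two preparatory lemmas --- Lemma~\ref{lem:alter}, which rewrites $\kappa_Q$ in terms of $\kappa$ and $\kappa'$, and Lemma~\ref{lem:copies}, which controls disjoint copies --- into a polynomial-time reduction from ${\bf MIN\_\neg WOD}$ to ${\bf QKAPPA}$.

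First I would handle membership in NP. A YES-certificate is either a WOD set of size at least $k$ (witnessing $\kappa(G)\ge k$) or a non-WOD set of size at most $n-k$ (witnessing $\kappa'(G)\le n-k$). Either condition is verifiable in polynomial time by solving the linear system $\Gamma_{V\setminus B}\cdot X = 1_B$ over $\mathbb{F}_2$, exactly as in the proof of membership for ${\bf MAX\_WOD}$.

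For hardness I would reduce from ${\bf MIN\_\neg WOD}$. Given an instance $(G,k)$ with $n=|V(G)|$, and assuming WLOG $1\le k<n$, I would output $(G',k')$ where $G':=G^{k+1}$ and $k':=(k+1)n-k$; this is clearly polynomial-time computable. By Lemmas~\ref{lem:alter} and~\ref{lem:copies},
\[
\kappa_Q(G') \;=\; \max\bigl((k+1)\kappa(G),\,(k+1)n-\kappa'(G)\bigr).
\]
I would then prove that $\kappa_Q(G')\ge k'$ iff $\kappa'(G)\le k$. The $(\Leftarrow)$ direction is immediate from the second argument of the maximum. For $(\Rightarrow)$, if the first argument meets the threshold, then $(k+1)(n-\kappa(G))\le k$; but $\kappa(G)\le n-1$ for any nonempty graph (the candidate $B=V$ forces $C\subseteq\emptyset$, hence $Odd(C)=\emptyset\not\supseteq V$), so $(k+1)(n-\kappa(G))\ge k+1>k$, a contradiction. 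Thus the second argument must meet the threshold, giving $\kappa'(G)\le k$.

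The main subtlety I anticipate is calibrating the parameters so that the $\kappa$-branch of the $\max$ cannot spuriously certify $\kappa_Q(G')\ge k'$ on its own: the choice $r=k+1$ is exactly what forces that branch to require $\kappa(G)=n$, which the trivial bound rules out. Everything else is a routine combination of the two preceding lemmas.
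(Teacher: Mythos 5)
Your proof is correct and follows essentially the same route as the paper: NP membership via polynomial-time verifiability of (non-)WOD-ness, and hardness by the reduction $(G,k)\mapsto (G^{k+1},(k+1)n-k)$ from ${\bf MIN\_\neg WOD}$ using Lemmas~\ref{lem:alter} and~\ref{lem:copies}, with the bound $\kappa(G)\le n-1$ killing the $\kappa$-branch of the maximum. (You even state the target condition correctly as $\kappa'(G)\le k$, where the paper's text contains a sign typo writing $\kappa'(G)\ge k$.)
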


\begin{proof}
{\bf QKAPPA} is in NP since a WOD set of size $k$ or a non-WOD set of size $n-k$ is a YES certificate.
For the completeness, we use a reduction to the problem ${\bf MIN\_\neg WOD}$.
Given a graph $G$ and any $k\ge0$, $\kappa_Q(G^{k+1})\ge (k+1)n-k \Leftrightarrow \Big(\kappa(G^{k+1})\ge (k+1)n-k \text{ or } \kappa'(G^{k+1})\le  k\Big) \Leftrightarrow \Big(\kappa(G)\ge n-1+\frac{1}{k+1} \text{ or } \kappa'(G)\ge k\Big) \Leftrightarrow \Big(\kappa(G) > n-1 \text{ or } \kappa'(G)\ge k\Big)$.
In the last disjunction, the first inequality $\kappa(G) > n-1$ is always false since for any graph $G$ of order $n$ we have $\kappa(G) \le n-1$.
Thus, the answer of the oracle call gives the truth of the second inequality $\kappa'(G) \ge k$ which corresponds to the problem ${\bf MIN\_\neg WOD}$.
As a consequence, {\bf QKAPPA} is NP-complete.
\hfill $\Box$
\end{proof}

\section{Graphs with small  $\kappa_Q$}
\label{sec:smallkappaQ}

\sloppy
Using lemma \ref{gpq},  the graphs $G_{\sqrt{n},\sqrt{n}}$ (when $n=p^2$) are such that $\kappa_Q(G_{\sqrt{n},\sqrt{n}})= n-\sqrt{n}$.

In this section, we prove using the asymmetric Lov\'asz Local Lemma \cite{lovasz} that there exists an infinite family of graphs $\{G_i\}$  such that $\kappa_Q(G_i)\le 0.811n_i$ where $n_i$ is the order of $G_i$. Moreover,  we prove that for a  random graph $G(n,1/2)$ (graph on $n$ vertices  where each pair of vertices have probability 1/2  to have an edge connecting them) with  $n\ge 100$,  
$\kappa_Q(G(n,1/2))\le 0.85n$ with  high probability. 

First we prove the following lemma:

\begin{lemma}
\label{problemme}
Given $k$ and $G=(V,E)$, if for any non empty set $D \subseteq V$, $|D\cup Odd(D)|> n-k$ and $|D\cup (V \setminus Odd(D))|> n-k$ then $\kappa_Q(G)< k$. 
\end{lemma}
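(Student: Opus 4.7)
The plan is to prove the two inequalities $\kappa(G)<k$ and $\kappa'(G)>n-k$ separately, since by Lemma \ref{lem:alter} they together give $\kappa_Q(G)=\max(\kappa(G),n-\kappa'(G))<k$. The two hypotheses of the lemma split naturally, each delivering one of the two inequalities, via the structural characterizations of WOD and non-WOD sets available from earlier in the paper.

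For $\kappa'(G)>n-k$, I would take an arbitrary non-WOD set $B$ and apply Lemma \ref{accessing} to obtain $D\subseteq B$ with $|D|$ odd and $Odd(D)\subseteq B$. Then $D\cup Odd(D)\subseteq B$, and since $|D|$ odd forces $D$ non-empty, the first hypothesis $|D\cup Odd(D)|>n-k$ immediately yields $|B|>n-k$. Taking the minimum over all non-WOD sets gives $\kappa'(G)>n-k$.

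For $\kappa(G)<k$, I would take an arbitrary WOD set $B$ witnessed by $C\subseteq V\setminus B$ with $B\subseteq Odd(C)$. Since $B\cap C=\emptyset$, we have $B\subseteq Odd(C)\setminus C$. The key observation is the set-identity
\[
Odd(C)\setminus C \;=\; V\setminus\bigl(C\cup(V\setminus Odd(C))\bigr),
\]
so the second hypothesis applied to $D=C$ (when $C\neq\emptyset$) gives $|Odd(C)\setminus C|<k$, hence $|B|<k$. The corner case $C=\emptyset$ is trivial since then $Odd(C)=\emptyset$ forces $B=\emptyset$.

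The whole argument is essentially a dictionary translation: Lemma \ref{accessing} converts "non-WOD" into an inclusion $D\cup Odd(D)\subseteq B$ matching the first hypothesis, while the complement identity converts "WOD witnessed by $C$" into an inclusion $B\subseteq Odd(C)\setminus C$ matching the second. I do not expect any real obstacle; the main thing to check carefully is that the two asymmetric hypotheses are paired with the correct direction (non-WOD with the first, WOD with the second) and that the $C=\emptyset$ edge case is handled.
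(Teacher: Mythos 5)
Your proposal is correct and follows essentially the same route as the paper: the first hypothesis bounds $\kappa'(G)>n-k$ via Lemma \ref{accessing}, and the second bounds $\kappa(G)<k$ by applying the hypothesis to the witness $C$ of a WOD set (the paper phrases this as a contradiction, you via the equivalent complement identity), after which Lemma \ref{lem:alter} concludes. Your version is if anything slightly more explicit about the use of Lemma \ref{accessing} and the $C=\emptyset$ corner case.
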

\begin{proof}
Since $\forall D \subseteq V$ $|D\cup Odd(D)|> n-k$, $\kappa'(G)> n-k$. Let $B\subseteq V$, $|B|\ge k$, if $B$ is not WOD then $\exists C\subseteq V\setminus B$ such that  $B\subseteq Odd(C)$, so $(V \setminus Odd(C))\subseteq V\setminus B$ which implies $|C\cup (V \setminus Odd(C))|\le n-k$. 
\hfill $\Box$
\end{proof}

We use the asymmetric form of the Lov\'asz Local Lemma that can be stated as follows:
\begin{theorem}[Asymmetric Lov\'asz Local Lemma]
Let $\mathcal{A} = \{A_1, \cdots, A_n\}$ be a set of ÒbadÓ events in an arbitrary probability space and let $\Gamma(A)$ denote a subset of $\mathcal{A}$ such that $A$ is independent from all the events outside $A$ and $\Gamma(A)$.
If for all $A_i$ there exists $w(A_i) \in [0,1)$ such that
$Pr(A_i) \leq w(A_i) \prod_{B_j \in \Gamma(A_i)} (1-w(B_j))$
then we have
$Pr(\overline{A_1}, \cdots, \overline{A_n}) \geq \prod_{A_j \in \mathcal{A}} (1-w(A_j))$.
\end{theorem}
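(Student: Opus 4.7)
The plan is to establish a stronger inductive statement from which the theorem follows immediately. Specifically, I will prove that for every index $i$ and every subset $S \subseteq \{1,\ldots,n\}$ with $i \notin S$,
$$\Pr\Bigl[A_i \,\Bigm|\, \bigcap_{j \in S} \overline{A_j}\Bigr] \leq w(A_i).$$
Once this claim is in hand, the chain rule gives
$$\Pr\Bigl[\bigcap_{i=1}^n \overline{A_i}\Bigr] = \prod_{i=1}^n \Pr\Bigl[\overline{A_i} \,\Bigm|\, \bigcap_{j<i} \overline{A_j}\Bigr] \geq \prod_{i=1}^n (1 - w(A_i)),$$
which is the conclusion of the theorem.

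The inductive statement will be proved by induction on $|S|$. For the base case $|S|=0$, the hypothesis $\Pr(A_i) \leq w(A_i)\prod_{B_j \in \Gamma(A_i)}(1-w(B_j))$ immediately yields $\Pr(A_i) \leq w(A_i)$, since each factor $(1-w(B_j))$ lies in $(0,1]$. For the inductive step, I would partition $S = S_1 \cup S_2$ with $S_1 = \{j \in S : A_j \in \Gamma(A_i)\}$ and $S_2 = S \setminus S_1$, and then rewrite the conditional probability as a ratio:
$$\Pr\Bigl[A_i \,\Bigm|\, \bigcap_{j \in S} \overline{A_j}\Bigr] = \frac{\Pr\bigl[A_i \cap \bigcap_{j \in S_1}\overline{A_j} \,\bigm|\, \bigcap_{j \in S_2}\overline{A_j}\bigr]}{\Pr\bigl[\bigcap_{j \in S_1}\overline{A_j} \,\bigm|\, \bigcap_{j \in S_2}\overline{A_j}\bigr]}.$$
The numerator is bounded above by $\Pr[A_i \mid \bigcap_{j \in S_2}\overline{A_j}]$, which equals $\Pr(A_i)$ because $A_i$ is independent of the events indexed by $S_2$ (all such events lie outside $\{A_i\}\cup \Gamma(A_i)$), hence independent of their intersection. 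Applying the theorem's hypothesis bounds this by $w(A_i)\prod_{B_j \in \Gamma(A_i)}(1-w(B_j))$.

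For the denominator, I would expand it via the chain rule as a product $\prod_{j \in S_1}\Pr[\overline{A_j}\mid \bigcap_{k}\overline{A_k}]$ where each conditional involves a strictly smaller indexing set, invoking the inductive hypothesis on each factor to obtain $\prod_{j \in S_1}(1-w(A_j)) \geq \prod_{B_j \in \Gamma(A_i)}(1-w(B_j))$. Dividing the two bounds collapses the product over $\Gamma(A_i)$ and leaves $w(A_i)$, completing the induction. The main subtlety — and the only place care is needed — is justifying that conditioning on $\bigcap_{j \in S_2}\overline{A_j}$ does not disturb $\Pr(A_i)$; this is precisely the strength of the independence hypothesis on $A_i$ relative to events outside $\Gamma(A_i)$, and it is what forces the decomposition of $S$ into the neighbor part $S_1$ and the independent part $S_2$. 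The remaining work is purely bookkeeping on the product indices.
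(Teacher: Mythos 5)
Your proposal is correct, but note that the paper itself offers no proof of this statement: the Asymmetric Lov\'asz Local Lemma is quoted as a known result (cited to Lov\'asz) and used as a black box in Section~\ref{sec:smallkappaQ}, so there is no internal argument to compare against. What you have written is the standard inductive proof: the strengthened claim $\Pr[A_i \mid \bigcap_{j\in S}\overline{A_j}]\le w(A_i)$ by induction on $|S|$, the split of $S$ into neighbours $S_1\subseteq\Gamma(A_i)$ and non-neighbours $S_2$, the bound on the numerator via mutual independence of $A_i$ from the events outside $\{A_i\}\cup\Gamma(A_i)$ together with the hypothesis $\Pr(A_i)\le w(A_i)\prod_{B_j\in\Gamma(A_i)}(1-w(B_j))$, and the chain-rule expansion of the denominator with the inductive hypothesis applied to each factor, using $S_1\subseteq\Gamma(A_i)$ and $1-w(\cdot)\in(0,1]$ to compare $\prod_{j\in S_1}(1-w(A_j))$ with $\prod_{B_j\in\Gamma(A_i)}(1-w(B_j))$. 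All the steps go through; two small points you should make explicit in a full write-up are (i) that the independence hypothesis must be read as \emph{mutual} independence of $A_i$ from the collection of events outside $\{A_i\}\cup\Gamma(A_i)$ (pairwise independence does not give $\Pr[A_i\mid\bigcap_{j\in S_2}\overline{A_j}]=\Pr(A_i)$), which is the intended reading of the theorem, and (ii) that the conditioning events have positive probability, which follows inductively since each intersection $\bigcap_{j\in S}\overline{A_j}$ is bounded below by $\prod_{j\in S}(1-w(A_j))>0$, so all conditional probabilities are well defined (and the case $S_1=\emptyset$ is the trivial instance where the denominator is $1$).
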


\begin{theorem}\label{thm:exist}

 There exists an infinite family of graphs $\{G_i\}$  such that $\kappa_Q(G_i)\le 0.811n_i$ where $n_i$ is the order of $G_i$.
\end{theorem}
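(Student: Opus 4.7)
The plan is a probabilistic existence argument on the random graph $G(n,1/2)$. By Lemma~\ref{problemme}, if a graph $G$ on $n$ vertices satisfies, for every non-empty $D\subseteq V$, both $|D\cup Odd(D)|>n-k$ and $|D\cup(V\setminus Odd(D))|>n-k$ with $k=\lfloor 0.811\,n\rfloor$, then $\kappa_Q(G)<k$. I will show that these conditions hold simultaneously for $G(n,1/2)$ with strictly positive probability; extracting one such graph for each $n$ in an infinite set yields the desired family $\{G_i\}$.

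For a fixed non-empty $D$ of size $d$, each vertex $v\notin D$ lies in $Odd(D)$ with probability exactly $1/2$, independently across $v$, because the parity of a non-empty sum of independent fair coins is uniform. Hence $|Odd(D)\setminus D|\sim\mathrm{Bin}(n-d,1/2)$. Since $|D\cup Odd(D)|=d+|Odd(D)\setminus D|$ and $|D\cup(V\setminus Odd(D))|=n-|Odd(D)\setminus D|$, the two failure conditions translate into the tail events
\[
A_D^{+}:=\{|Odd(D)\setminus D|\le n-k-d\}\quad\text{and}\quad A_D^{-}:=\{|Odd(D)\setminus D|\ge k\},
\]
whose probabilities a standard Chernoff bound controls as $P[A_D^{\pm}]\le\exp(-(n-d)\,I_{\pm}(d,n,k))$ for explicit Kullback--Leibler rates $I_{\pm}$. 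The natural symmetry $d\leftrightarrow n-d$ exchanging $A_D^{+}$ with $A_D^{-}$ lets me restrict attention to $d\le n/2$.

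The dependency structure is transparent: $A_D$ is determined by the random edges of the cut $(D,V\setminus D)$, so $A_D$ and $A_{D'}$ are independent whenever these two cuts share no edge. I plug into the asymmetric Lov\'asz Local Lemma the weights $w(A_D^{\pm}):=x^{|D|}$ for a single $x\in(0,1)$ to be tuned, and, using the crude upper bound $\binom{n}{d'}$ on the number of dependent sets $D'$ of size $d'$, reduce the LLL hypothesis
\[
P[A_D^{\pm}]\le x^{|D|}\prod_{D'\in\Gamma(A_D)}(1-x^{|D'|})
\]
to a single numerical inequality in $\alpha=0.811$, $d/n$, and $x$.

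The main obstacle is the intermediate regime $d=\Theta(n)$: there both $P[A_D^{\pm}]$ and the number of dependent events are exponentially large of comparable order, which is precisely why a na\"ive first-moment union bound is on the edge of failing at $\alpha=0.811$ and why the sharper LLL machinery is needed. The correct tuning is to choose $-\ln x$ matching the Chernoff exponent, so that $\sum_{D'}x^{|D'|}=O(1)$ and hence $\prod_{D'}(1-x^{|D'|})$ is bounded below by an absolute positive constant; the LLL inequality then holds uniformly in $d$, and the conclusion $P\bigl[\bigcap_D(\overline{A_D^{+}}\cap\overline{A_D^{-}})\bigr]>0$ produces, for every large enough $n$, a graph $G_n$ of order $n$ with $\kappa_Q(G_n)\le 0.811\,n$, giving the required infinite family.
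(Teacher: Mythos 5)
Your overall strategy coincides with the paper's: apply Lemma~\ref{problemme} to $G(n,1/2)$, bound the tails of the bad events $A_D^{\pm}$ by a Chernoff/entropy estimate on $\mathrm{Bin}(n-d,1/2)$, and conclude via the asymmetric Lov\'asz Local Lemma. The gap is precisely at the quantitatively delicate point, the choice of weights. You set $w(A_D^{\pm})=x^{|D|}$ for a single $x\in(0,1)$ and need simultaneously (a) $\sum_{D'}x^{|D'|}=O(1)$, so that $\prod_{D'}(1-x^{|D'|})$ is bounded below by a constant, and (b) $Pr(A_D^{\pm})\le x^{|D|}\cdot\Omega(1)$ for every $D$. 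These are incompatible. Requirement (a) already forces $x=O(1/n)$, since the $n$ singletons alone contribute $nx$ to the sum; but then for $|D|=\delta n$ with $\delta\in(0,1-c]$ fixed, the right-hand side of the LLL hypothesis is of order $(\theta/n)^{\delta n}=2^{-\delta n\log_2 n\,(1+o(1))}$, super-exponentially small, while $Pr(A_D^{+})\ge 2^{-n}$ (the tail contains at least one binomial term), so the hypothesis fails for all large $n$. Conversely, if you keep $x$ constant so that (b) could hold for linear-size $D$ (the best possible constant is $\log_2(1/x)=\min_{\delta\le 1-c}\gamma(\delta)/\delta\approx 4.1$, where $\gamma(\delta)=(1-\delta)\bigl(1-H(c/(1-\delta))\bigr)$ is the tail exponent), then $\sum_{d'}\binom{n}{d'}x^{d'}$ is exponentially large, the product is doubly-exponentially small, and (b) fails anyway. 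No geometric weight $x^{|D|}$ can work, because the number of events involving sets of size $d$ grows like $2^{nH(d/n)}$, which is not geometric in $d$.

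The fix, and what the paper does, is to let the weight absorb the count of events of the same size: take $w(A_D)=w(A'_D)=\frac{1}{r\binom{n}{|D|}}$ with $r\approx n$. Then the total weight is $O(1)$, the product is at least $2^{-4(1-c)n/r}=\Omega(1)$, and the per-event condition reduces, after taking logarithms and letting $n\to\infty$, to $(1-d)\left[H\!\left(\frac{c}{1-d}\right)-1\right]+H(d)\le 0$ for all $d\in(0,1-c]$, which must then actually be verified (numerically) to hold for $c>0.811$; this verification is the real content that fixes the constant $0.811$, and your write-up also leaves it as an assertion. Two smaller points: the ``symmetry $d\leftrightarrow n-d$'' you invoke is not a symmetry of these events (what is true is that events with $|D|>n-k$ are vacuous, and $1-c<1/2$), and the observation that edge-disjoint cuts give independent events buys nothing, since almost every pair of cuts shares potential edges; like the paper, you effectively treat all events as mutually dependent, which is fine.
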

\begin{proof}

Let $G(n,1/2)=(V,E)$ be a random graph. We  use the  asymmetric Lov\'asz local lemma to show that the probability that  for all non empty set $D \subseteq V$  $|D\cup Odd(D)|>(1-c)n$ and $|D\cup (V \setminus Odd(D))|> (1-c)n$ is positive for some constant $c$. This ensures by Lemma \ref{problemme} that $\kappa_Q(G)< cn$. 

We consider the ``bad" events $A_D \,:\, |Odd(D)\cup D| \le (1-c)n$ and  $A'_D \,:\, |Odd(D)\cup (V \setminus Odd(D))| \le (1-c)n$.
When $|D|>(1-c)n$, $Pr(A_D)=Pr(A'_D)=0$, therefore the previous events are defined for all $D$ such that $|D| \leq (1-c)n$.

For all $D$ such that $|D| \leq (1-c)n$, we want to get an upper bound on $Pr(A_D)$.
Let $|D|=dn$ for some $d \in \left( 0,1-c \right] $.
For all $u \in V \setminus D$, $Pr(``u \in Odd(D)") = \frac{1}{2}$.
If $D$ is fixed, the events $``u \in Odd(D)"$ when $u$ is outside $D$ are independent.
Therefore, if the event $A_D$ is true, any but at most $(1-c-d)n$ vertices outside $D$ are contained in $Odd(D)$.
There are $(1-d)n$ vertices outside $D$, then $Pr(A_D) = \left( \frac{1}{2} \right)^{(1-d)n} \sum_{k=0}^{(1-c-d)n} {(1-d)n \choose k} \leq \left( \frac{1}{2} \right)^{(1-d)n} 2^{(1-d)nH\left( \frac{1-c-d}{1-d} \right)} = 2^{(1-d)n \left[ H\left( \frac{c}{1-d} \right) - 1 \right]}$ where $H : t \mapsto -t\log_2(t)-(1-t)\log_2(1-t)$ is the binary entropy function.

Similarily, $Pr(A'_D) \leq 2^{(1-d)n \left[ H\left( \frac{c}{1-d} \right) - 1 \right]}$.

We consider that all the events can be dependent. 
For any $D\subseteq V$ such that $0<|D| \leq (1-c)n$, we define $w(A_D)=w(A'_D)=\frac{1}{r{n\choose |D|}}$.
 First, we verify that $Pr(A_D) \leq w(A_D) \prod_{D' \subseteq V, |D| \leq (1-c)n} (1 - w(A_{D'}))(1 - w(A'_{D'}))$.
 The product of the right-hand side of the previous equation can be written $p = \prod_{|D'| = 1}^{(1-c)n}  \left( 1 - \frac{1}{r{n  \choose |D'|}} \right)^{2{n  \choose |D'|}} = \left[ \prod_{|D'| = 1}^{(1-c)n}  \left( 1 - \frac{1}{r{n  \choose |D'|}} \right)^{r{n  \choose |D'|}} \right]^{\frac{2}{r}}$.
 The function $f : x \mapsto \left( 1-\frac{1}{x} \right)^x$ verifies $f(x) \geq \frac{1}{4}$ when $x \geq 2$, therefore $p \geq \left( \frac{1}{4} \right)^{\frac{2}{r}(1-c)n} = 2^{-\frac{4(1-c)n }{r}}$ for any $r \geq 2$.
 Thus, it is sufficient to have $2^{(1-d)n \left[ H\left( \frac{c}{1-d} \right) - 1 \right]} \leq \frac{1}{r{n  \choose dn }} 2^{-\frac{4(1-c)n }{r}}$.
 Rewriting this inequality gives $r{n  \choose dn } 2^{(1-d)n \left[ H \left( \frac{c}{1-d} \right) -1 \right] + \frac{4(1-c)n}{r}} \leq 1$.
 Thanks to the bound ${n  \choose dn } \leq 2^{nH\left( \frac{dn }{n } \right)}$ and after applying the logarithm function and dividing by $n$, it is sufficient that $(1-d) \left[ H \left( \frac{c}{1-d} \right) -1 \right] + H(d) + \frac{4(1-c)}{r} + \frac{\log_2{r}}{n} \leq 0$.
 If we take $r=n$, the condition becomes asymptotically $(1-d) \left[ H \left( \frac{c}{1-d} \right) -1 \right] + H(d) \leq 0$.
 
 Numerical analysis shows that this condition is true for any $c>0.811$ and for all $d \in \left( 0,1-c \right]$.
 Thus, thanks to the Lov\'asz Local Lemma, for any $c>0.811$, $Pr(\kappa_Q(G)< cn) \geq p \geq \left( \frac{1}{4} \right)^{\frac{2}{r}(1-c)n} > 0$, therefore there exists an infinite family of graphs $\{G_i\}$  such that $\kappa_Q(G_i)\leq 0.811n_i$ where $n_i$ is the order of $G_i$ for $n_i \geq N_0$ for some $N_0 \in \mathbb{N}$.
\hfill $\Box$
\end{proof}

Very recently \cite{PS12}, Sarvepalli proved that quantum secret sharing protocols based on graph states are equivalent to quantum codes. 
 Combining this result with the  Gilbert Varshamov bounds on quantum stabilizer codes \cite{FM04}, we can provide an alternative proof of theorem \ref{thm:exist}. However, we believe the use of the Lov\'asz Local Lemma offers several advantages: the proof is a purely graphical proof with a potential extension to the construction of good quantum secret sharing schemes using the recent development in the algorithmic version \cite{MT10} of the Lov\'asz Local Lemma. Moreover, the use of the probabilistic methods already offers a way of generating good quantum secret sharing protocols with high probability  by adjusting the parameters of the Lov\'asz Local Lemma:

 \begin{theorem} There exists $n_0$ such that for any $n>n_0$, a random graph $G(n,\frac12)$ has a $\kappa_Q$ smaller than $0.811n$ with high probability: $$Pr\left(\kappa_Q(G(n,\frac12))<0.811n\right)\ge 1-\frac1{n}$$
 
 \end{theorem}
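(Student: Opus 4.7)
The plan is to revisit the Lov\'asz Local Lemma argument of Theorem~\ref{thm:exist}, keeping the same bad events $A_D$ and $A'_D$ and the same probability estimates, but tuning the weight parameter $r$ so as to push the resulting lower bound close to~$1$.

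Recall that setting $w(A_D)=w(A'_D)=\tfrac{1}{r\binom{n}{|D|}}$ yields two requirements: a sufficient LLL condition
\[
(1-d)\left[H\!\left(\tfrac{c}{1-d}\right)-1\right]+H(d)+\tfrac{4(1-c)}{r}+\tfrac{\log_2 r}{n}\le 0 \qquad (d \in (0,1-c]),
\]
and a lower bound $p\ge 2^{-4(1-c)n/r}$ on the probability that none of the bad events occurs. With $r=n$ Theorem~\ref{thm:exist} gets only $p\ge 2^{-4(1-c)}$, a constant; to reach $1-1/n$ the plan is instead to take $r = 3n^2$.

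With this choice, the two correction terms $\tfrac{4(1-c)}{r}$ and $\tfrac{\log_2 r}{n}$ become $O(1/n^2)$ and $O((\log n)/n)$, hence both $o(1)$. The numerical analysis of Theorem~\ref{thm:exist} already guarantees that the baseline inequality $(1-d)[H(c/(1-d))-1]+H(d)\le 0$ holds strictly for every $d\in(0,1-c]$ when $c>0.811$, and the left-hand side extends continuously to $d=0$ with value $H(c)-1 < 0$. By compactness of $[0,1-c]$ the slack is therefore uniformly bounded away from zero, so the $o(1)$ corrections are absorbed for all $n \ge n_0$. The probability bound then reads
\[
p \;\ge\; 2^{-4(1-c)/(3n)} \;\ge\; 1-\tfrac{4(1-c)\ln 2}{3n} \;\ge\; 1-\tfrac{1}{n},
\]
where the last inequality uses $4(1-c)\ln 2/3 < 1$ at $c=0.811$. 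Applying Lemma~\ref{problemme} then yields $\kappa_Q(G(n,1/2)) < 0.811\,n$ with probability at least $1-1/n$, as required.

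The one genuinely delicate point is the uniform-slack observation: the corrections $\tfrac{4(1-c)}{r}+\tfrac{\log_2 r}{n}$ depend on $n$ but not on $d$, so one must know that the infimum over $d \in (0,1-c]$ of minus the baseline left-hand side is strictly positive. This follows from continuity of the binary entropy together with the compactness of the extended interval $[0,1-c]$, so no further numerical work is needed beyond what Theorem~\ref{thm:exist} already performs; the remainder of the argument is just a bookkeeping exercise that trades the constant $r=n$ for the polynomial $r=3n^2$.
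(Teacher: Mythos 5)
Your proposal is correct and follows essentially the same route as the paper: the paper also reruns the Lov\'asz Local Lemma argument of Theorem~\ref{thm:exist} with a quadratically growing weight parameter, choosing $r=4\ln(2)(1-c)n^2$ so that the bound becomes exactly $e^{-1/n}\ge 1-\frac1n$, whereas your $r=3n^2$ gives $2^{-4(1-c)/(3n)}\ge 1-\frac{4(1-c)\ln 2}{3n}\ge 1-\frac1n$ — an equivalent bookkeeping choice. Your explicit uniform-slack (compactness) remark is a reasonable way to justify absorbing the $O((\log n)/n)$ correction terms, which the paper handles implicitly by asserting the corrected inequality holds for $n\ge 26681$ at $c=0.811$.
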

 
 \begin{proof} The proof of the theorem is done as in the proof of theorem \ref{thm:exist}, by taking $c=0.811$ and $r=4\ln(2)(1-c)n^2$. It guarantees that for $n\ge 26681$, $(1-d) \left[ H \left( \frac{c}{1-d} \right) -1 \right] + H(d) + \frac{4(1-c)}{r} + \frac{\log_2{r}}{n} \leq 0$. Thus, for any $D\subseteq V$ such that $0<|D| \leq (1-c)n$, $Pr(A_D) \leq w(A_D) \prod_{D' \subseteq V, |D| \leq (1-c)n} (1 - w(A_{D'}))(1 - w(A'_{D'}))$.  
 Moreover the probability that none of the bad events occur is $Pr\left(\kappa_Q(G(n,\frac12))<0.811n\right)\ge \left( \frac{1}{4} \right)^{\frac{2}{r}(1-c)n} = \left( \frac{1}{4} \right)^{\frac{1}{2n\ln(2)}}=e^{-\frac1n}\ge 1-\frac1n$. 
\hfill $\Box$
 \end{proof}

 Notice that, as we proved that {\bf QKAPPA } is NP-Complete, one cannot double check easily whether a random graph generated by this method has actually a small $\kappa_Q$.

\section{Conclusion}

In this paper, we have studied the quantities $\kappa$, $\kappa'$ and $\kappa_Q$ that can be computed on graphs.
They correspond to the extremal cardinalities WOD and non-WOD sets can reach.
These quantities present strong connections with quantum information theory and the graph state formalism, and especially in the field of quantum secret sharing.

Thus, we have studied and computed these quantities on some specific families of graphs, and we deduced they are candidates for good  quantum secret sharing protocols. Then we have proven the NP-completeness of the decision problems associated with $\kappa$, $\kappa'$ and $\kappa_Q$. Finally we have proven the existence of an infinite family of graphs $\{G_i\}$  such that $\kappa_Q(G_i)\le 0.811n_i$ where $n_i$ is the order of $G_i$, and even that if one picks uniformly at random a graph $G$ of order $n$, then 
 $\kappa_Q(G)\le 0.811n$ with high probability. 
  An interesting question is to find an explicit family of graphs   $\{G_i\}$  such that $\kappa_Q(G_i)\le c n_i$ where $n_i$ is the order of $G_i$ and $c$ a constant smaller than 1.

A related question is also still open: is the problem of deciding whether the minimal degree up to local complementation is greater than $k$ NP-complete?
This problem seems very close to finding $\kappa'$ since it consists in finding the smallest set of vertices of the form $D \cup Odd(D)$ with $D \neq \emptyset$, without the constraint of parity $|D|=1 \bmod 2$ as for $\kappa'$.

\bibliographystyle{plain}

\appendix
\section{Appendix}

Proof of Lemma \ref{accessing}:\\\\
{\bf Lemma \ref{accessing}.}
\emph{Given a graph $G=(V,E)$, $B$ is not a WOD set if and only if there exists $D\subseteq B$ such that $|D|=1\mod2$ and $Odd(D)\subseteq B$. }

\begin{proof}
We express this lemma in the following way:\\
\emph{
Given a graph $G=(V,E)$, for any $B\subseteq V$, $B$ satisfies exactly one of the following properties:\\
$~~~i.$$~~\exists D\subseteq B, D\cup Odd(D)\subseteq B \text{~and~} |D|=1 \bmod{2}$\\
$~~~ii.$ $\exists C\subseteq V\setminus B, Odd(C)\cap B = B$\\}
For a given $B\subseteq V$, let $\Gamma_{B}$ be the cut matrix induced by $B$, i.e. the sub-matrix of the adjacency matrix $\Gamma$ of $G$ such that  the columns of $\Gamma_{B}$ correspond to the vertices in $B$ and its rows to the vertices in $V\setminus B$. $\Gamma_{B}$ is the matrix representation of the linear function which maps every $X\subseteq B$ to $\Gamma_{B}.X = Odd(X)\cap (V\setminus B)$, where the set $X$ is identified with its characteristic column vector.  Similarly, $\forall Y\subseteq V\setminus B$, $\Gamma_{V\setminus B}.Y = Odd(Y)\cap B$ where $\Gamma_{V\setminus B} = \Gamma_{B}^T$ since $\Gamma$ is symmetric. Moreover, notice that for any set $X,Y\subseteq V$, $|X\cap Y| \bmod{2}$ is given by the matrix product $Y^T.X$ where again sets are identified with their column vector representation. Equation $(i)$ is satisfied if and only if $\exists D$ such that $\left(\frac{B^T}{\Gamma_{B}}\right).D = \left(\frac 1 0\right)$ which is equivalent  to 
$\rank\left(\frac{B^T}{\Gamma_{B}}\right)  
= \rank\left(\frac{B^T~|~1}{\hspace{0.04cm}\Gamma_{B}~|~0}\right) 
= \rank\left(\frac{~0~~|~1}{\Gamma_{B}~|~0}\right) 
=\rank(\Gamma_{B})+1$. 
Thus $(i)$ is true iff $\pi(B) = 1$ where $\pi(B):=\rank\left(\frac{B^T}{\Gamma_{B}}\right) - \rank(\Gamma_{B})$. Similarly equation $(ii)$ is satisfied if and only if $\exists C$ such that $\Gamma_{V\setminus B}.C = B$ if and only if $\rank(\Gamma_{V\setminus B}|B) = \rank(\Gamma_{V\setminus B})$. Thus $(ii)$ is true if and only if $\pi(B)=0$. Since for any $B\subseteq V$,  $\pi(B)\in \{0,1\}$ it comes that either $(i)$ is true or $(ii)$ is true.
\hfill $\Box$
\end{proof}

Proof of the Lemma \ref{Gpq}:\\\\
{\bf Lemma \ref{Gpq}.}

\emph{For any $p,q\in \mathbb N$}, 

$\begin{array}{rclcrcll}
\kappa(G_{p,q})&=&n-p & \text{and} & \kappa'(G_{p,q})&=& q  & \text{ if $q=1\bmod 2$}\\
\kappa(G_{p,q})&=&\max(n-p,n-q)& \text{and} & \kappa'(G_{p,q})&=& p+q+1& \text{ if $q=0\bmod 2$}
\end{array}$
\label{gpq}

\begin{proof}
"f $q=1 \bmod 2$ 
\\-- $[\kappa(G_{p,q})\ge n-p]$: The subset $B$ composed of all the vertices but a maximal independent set (MIS) -- i.e. an independent set of size $p$ -- is in the odd neighborhood of each vertex in $V\setminus B$. 
Therefore $B$ is WOD and $|B| = n-p$.
Consequently, according to the previous definition, $\kappa(G_{p,q})\ge n-p$.
\\-- $[\kappa(G_{p,q})\le n-p]$: Any set $B$ such that $|B|>n-p$ contains at least one vertex from each of the $q$ MIS, i.e. a clique of size $q$.
Let $D\subseteq B$ be such a clique of size $|D|=q = 1\bmod{2}$.
Every vertex $v$ outside $D$ is connected to all the elements of $D$ but the one in the same MIS as $v$.
Thus $Odd(D)=\emptyset$. 
As a consequence, $B$ is non-WOD.
\\-- $[\kappa'(G_{p,q})\le q]$: $B$ composed of one vertex from each MIS is a non-WOD set (see previous item).
\\-- $[\kappa'(G_{p,q})\ge q]$: If $|B|<q$ then $B$ does not intersect all the MIS of size $p$, so $B$ is in the odd neighborhood of each vertex of such a MIS. So according to Definition \ref{WOD}, $B$ is WOD. 

"f $q=0 \bmod 2$ 

\begin{itemize}
\item $[\kappa(G)\ge max(n-p,n-q)]$: For $\kappa(G)\ge n-p$, see previous lemma. The subset $B$ composed of all the vertices but a clique of size $q$ (one vertex from each MIS) is in the odd neighborhood of $V\setminus B$. Indeed each vertex of $B$ is connected to $q-1=1\bmod{2}$ vertices of $V\setminus B$. So, according to Definition  \ref{WOD}, $B$ of size $n-q$ is WOD, as a consequence $\kappa(G)\ge n-q$.

\item $[\kappa(G)\le max(n-p,n-q)]$: Any set $B$ such that $|B|> max(n-p,n-q)$ contains at least one vertex from each MIS and moreover it contains a MIS $S$ of size $q$. Let $D\subseteq B\setminus S$ be a clique of size $q-1=1\bmod{2}$. Every vertex $u$ in $V\setminus B$ is connected to all the vertices in $D$ but  one, so $Odd(D)\subseteq B$.

\item  $[\kappa'(G)\le p+q-1]$: Let $S$ be an MIS. Let $B$ be the union of $S$ and  of a clique of size $q$. Let $D= B\setminus S$. $|D|=q-1=1\bmod{2}$. Every vertex  $u$ in $V\setminus B$ is connected to all the vertices of $D$ but one, so $Odd(D)\subseteq B$.

\item  $[\kappa'(G)\ge p+q-1]$: Let $|B|<p+q-1$. If  $B$ does not intersect all the MIS of size $p$, then $B$ is in the odd neighborhood  of each vertex of such a non intersecting MIS. If $B$ intersects all the MIS then it does not contain any MIS, thus there exists a clique $C\subseteq V\setminus B$ of size $q$. Every vertex in $B$ is in the odd neighborhood of $C$. 
\end{itemize}
\hfill $\Box$
\end{proof}

\end{document}